\title{Static Race Detection for RTOS Applications} 
\titlerunning{Data-Race Detection} 
\author{Rishi Tulsyan}{Indian Institute of Science Bangalore, India}{rishitulsyan@iisc.ac.in}{}{}
 \author{Rekha Pai}{Indian Institute of Science Bangalore, India}{rekhapai@iisc.ac.in}{}{}
\author{Deepak D'Souza\footnote{corresponding author}}{Indian Institute of Science Bangalore, India}{deepakd@iisc.ac.in}{}{}
\authorrunning{R. Tulsyan, R. Pai and D. D'Souza} 
\keywords{Static analysis, concurrency, data-race detection, RTOS} 
\newcommand{\nat}{\ensuremath{\mathbb{N}}} 
\newcommand{\ints}{\ensuremath{\mathbb{Z}}} 
\newcommand{\union} {\cup}
\newcommand{\true}{\ensuremath{\mathit{true}}}
\newcommand{\false}{\ensuremath{\mathit{false}}}
\newcommand{\vars}{\ensuremath{V}}
\newcommand{\disable}{\texttt{disableint}}
\newcommand{\enable}{\texttt{enableint}}
\newcommand{\suspend}{\cmd{suspend}}
\newcommand{\resume}{\cmd{resume}}
\newcommand{\create}{\cmd{create}}
\newcommand{\start}{\texttt{start}}
\newcommand{\myskip}{\texttt{skip}}
\newcommand{\lock}{\cmd{lock}}
\newcommand{\unlock}{\cmd{unlock}}
\newcommand{\suspendsched}{\cmd{suspendsched}}
\newcommand{\resumesched}{\cmd{resumesched}}
\newcommand{\block}{\texttt{block}}
\newcommand{\assume}{\texttt{assume}}
\newcommand{\cmd}[1]{\texttt{#1}}
\newcommand{\command}{\ensuremath{cmd}}
\newcommand{\ent}{\mathit{ent}}
\newcommand{\exit}{\mathit{ext}}
\newcommand{\Instr}{\mathit{inst}}
\newcommand{\locs}{L}	
\newcommand{\instr}{\ensuremath{\iota}}
\newcommand{\threads}{\ensuremath{\mathcal{T}}}
\newcommand{\codes}{\ensuremath{\mathit{F}}}
\newcommand{\locks}{\ensuremath{M}}
\newcommand{\states}{\ensuremath{S}}
\newcommand{\suspended}{\mathcal{S}}
\newcommand{\blocked}{\mathcal{B}}
\newcommand{\ready}{\mathcal{R}}
\newcommand{\pmap}{\ensuremath{\mathcal{P}}}
\newcommand{\acquired}{\mathcal{A}}
\newcommand{\trel}{\ensuremath{\Rightarrow}}
\newcommand{\llbracket}{[\![}
\newcommand{\rrbracket}{]\!]}
\newcommand{\smspace}{\hspace{0.2cm}}
\newcommand{\task}[1]{\textit{task}(#1)}
\newcommand{\isr}[1]{\textit{ISR}(#1)}
\newcommand{\taskfn}{\ensuremath{\mathit{task}}}
\newcommand{\isrfn}{\ensuremath{\mathit{ISR}}}
\newcommand{\type}{\ensuremath{\mathit{type}}}
\newcommand{\undef}{\mathit{undef}}
\newcommand{\susplist}{\ensuremath{\mathit{susplist}}}
\newcommand{\reslist}{\ensuremath{\mathit{reslist}}}
\newcommand{\pc}{\ensuremath{\mathit{pc}}}
\newcommand{\env}{\ensuremath{\phi}}
\newcommand{\id}{\ensuremath{\mathit{id}}}
\newcommand{\sd}{\ensuremath{\mathit{ss}}}
\newcommand{\main}{\ensuremath{\mathit{main}}}
\newcommand{\fmap}{\ensuremath{\mathcal{F}}}
\newcommand{\var}[1]{\emph{#1}}
\newcommand{\oibsymb}{\ensuremath{{\triangleleft\,}}}
\newcommand{\noibsymb}{\ensuremath{{\not\!\triangleleft\,}}}
\newcommand{\etal}{\emph{et al.}}
\newcommand{\tool}{RAPR} 
\begin{document}

\maketitle

\begin{abstract}
We present a static analysis technique for detecting data races in
Real-Time Operating System (RTOS) applications.
These applications are often employed in safety-critical tasks and the
presence of races may lead to erroneous behaviour with serious consequences.
Analyzing these applications is challenging due to the variety of
non-standard synchronization mechanisms they use.
We propose a technique based on the notion of an ``occurs-in-between''
relation between statements. This notion enables us to capture the interplay
of various synchronization mechanisms.
We use a pre-analysis and a small set of not-occurs-in-between patterns to
detect whether two statements may race with each other.
Our experimental evaluation shows that the technique is efficient and
effective in identifying races with high precision.
\end{abstract}

\section{Introduction}
\label{sec:introduction}

Real-Time Operating Systems (RTOSs) are small operating systems or
microkernels that an application programmer uses as a library to
create and manage the execution of multiple tasks or threads.
The programs written by the application programmer are called RTOS
applications and are programs typically written in C
or C++ that are compiled along with the RTOS kernel library and run on
bare metal processors.
Much of embedded software today, ranging from home appliances to
safety-critical systems like industrial
automation systems and flight controller software, are implemented as such
programs.

An RTOS application comprises multiple threads (even if these are
typically run on a single core) and hence they need to protect against
concurrency issues like data races.
Two statements are involved in a data race if they are conflicting
accesses to a shared memory location and can happen ``simultaneously''
or one after another.
Data races can lead to unexpected and erroneous program behaviours,
with serious consequence in safety-critical applications.

While detecting data races is important, doing this for
RTOS applications is a challenging problem.
This is because these programs use a variety of non-standard
synchronization mechanisms like dynamically raising and lowering
proirities, suspending other tasks and the scheduler, flag-based
synchronization, disabling and
enabling interrupts, in addition to the more standard locks and
semaphores.
A look at the ArduPilot flight control software \cite{ardupilot} which
is written in C++ and runs on the ChibiOS RTOS shows several instances
of \emph{each} of these synchronization mechanisms being used.
Standard techniques for race detection like lockset analysis
\cite{VoungJL07} or for priority-ceiling based scheduling and
flag-based synchronization \cite{SchwarzSVLM11,schwarz2014}, or the
disjoint-block approach of \cite{ChopraPD19} for disabling interrupts,
would not be precise enough as they do not handle the first two 
mechanisms mentioned above.
Extending the disjoint-block approach for these synchronization
mechanisms seems difficult.

Instead, in this work we adapt the disjoint-block approach of \cite{ChopraPD19}
to focus on a weaker notion of ``not occuring in between''.
Essentially, a statement $s_2$ does not \emph{occur in between} a statement
$s_1$ if it is not possible for a thread running $s_2$ to preempt a
thread while it is running $s_1$.
If $s_1$ and $s_2$ cannot occur in between each other they also cannot
race.
We identify six patterns or rules that ensure that a statement cannot
occur in between another.
We take the help of a pre-analysis to identify dynamic priority ranges
as well as task suspension information, for each statement in an
application.
Then for each pair of conflicting statements we check if the rules
tell us that they cannot occur in between each other.

We have implemented our analysis for FreeRTOS applications
(FreeRTOS \cite{freertos} is a popular open source RTOS), and analyse
several small benchmarks from the literature as well as a fragment of
the ArduPilot \cite{ardupilot} code, which we translate as a FreeRTOS application.
Our analysis runs in fractions of a second with an overall
precision rate of 73\%.

\section{Overview} 
\label{sec:overview}

We begin with an overview of our technique with an illustrative
example adapted from a 
\begin{wrapfigure}[18]{r}{0.5\textwidth}
\vspace{-0.5cm}
\begin{scriptsize}
\begin{lstlisting}
void main(...) {                Prio  Susp 
1. item = count = 0;              
2. xTaskCreate(prod,...,1, t1); 
3. xTaskCreate(cons,...,1, t2); 
4. vTaskStartScheduler(); 
}

void prod(...) {
10. for( ; ; ) {                 1,1  -
11.   vTaskSuspend(t2);          1,1  -
12.   item = 5;                  1,1  cons
13.   count = count+1;           1,1  cons
14.   vTaskResume(t2);           1,1  cons
15. }                            1,1  -
}

void cons(...) {
20. for( ; ; ) {                 1,1  -
21.   temp = item;               1,1  -   
22.   vTaskPrioritySet(NULL, 2); 1,1  -   
23.   count = count-1;           2,2  -  
24.   vTaskPrioritySet(NULL, 1); 2,2  -
25. }                            1,1  -
}
\end{lstlisting}
\end{scriptsize}
\caption{A producer-consumer FreeRTOS app}
\label{fig:prod-cons-freertos}
\end{wrapfigure}
FreeRTOS demo application.
The application, shown in Fig.~\ref{fig:prod-cons-freertos}, begins by
creating two task threads \texttt{t1} and \texttt{t2} that run the
task functions \texttt{prod} and \texttt{cons} respectively, both at
priority 1.
Once the scheduler is started in line~4 of \texttt{main}, the two
threads begin executing in a round-robin manner, preempting each other
whenever the time slice is over (unless one thread is suspended, or
the running thread has raised its priority above the other thread).
The \texttt{prod} thread protects its accesses to the shared variables
\texttt{item} and \texttt{count} by suspending the \texttt{cons}
thread in line~11, and resuming it in line~14 after the access.
Similarly, the \texttt{cons} thread protects its access to
\texttt{count} by temporarily raising its priority to 2 in line~22.

We are interested in statically detecting potential data races in this
application. We give a more precise definition of a race in
Sec.~\ref{sec:race}, but for now we can take it to mean that two
statements access a shared variable with at least one writing to it
(we call these ``conflicting'' accesses),
and these statements happen one after the other in some execution of
the application.

Our analysis begins by first performing a data-flow analysis to
identify the minimum and maximum dynamic priorities that each
statement can run at.
The computed values are shown on the second column from the right in
the figure, and represent the priorities just before the statement.
Thus at line~23 in \texttt{cons} the min and max priorities are both
2.
We also perform a ``suspended'' analysis to find out at each point,
which are the tasks that are guaranteed to be suspended. These values
are shown in the rightmost column.

Next, for each conflicting pair of accesses $s_1$ and $s_2$, we check
whether $s_2$ can ``occur in between'' $s_1$.
Essentially, $s_2$ can occur in between $s_1$ if there is an execution
in which while $s_1$ is executing, a context-switch may happen and
$s_2$ eventually executes before the context switches back to $s_1$.
If $s_2$ cannot occur in between $s_1$, and vice-versa, then one can
\emph{rule out} $s_1$ and $s_2$ being involved in a race.
To check the ``occur in between'' relation we use a small set of rules
(see Fig.~\ref{fig:occur-in-between} in Sec.~\ref{sec:rules}) which
tell us when $s_2$ \emph{cannot} occur in between $s_1$.
Thus, by the ``Suspend'' rule (C1), we can conclude that statements in
line~21 and 23 cannot occur in between the statements in line~12 and
13 (since the \texttt{cons} task is suspended here).
Similarly, by the ``Priority'' rule (C2), it follows that line~13
cannot occur in between line~23 (since it runs at a higher priority).
This allows us to conclude that the accesses to \texttt{count} in
lines~13 and 23 cannot race.
However for the accesses to \texttt{item} in lines~12 and 21, we are
unable to show that line~12 cannot occur in between 21, and hence our
analysis declares them as potentially racy. Indeed, these two accesses
are racy.

We note that analyses like \cite{SchwarzSVLM11,ChopraPD19} do not
handle these synchronization mechanisms and would be unable to declare
the accesses in line~13 and 23 to be non-racy.

\section{Interrupt-Driven Applications}
\label{sec:language}

In this section, we describe the syntax and semantics of an Interrupt-Driven
Application (IDA).
An IDA program is essentially a set of thread functions, which are
run by dynamically created threads during execution.
The functions are of two types: \emph{task} functions which will be
run by threads that are created dynamically at different priorities,
and \emph{ISR} functions which are run as Interrupt Service Routines
triggered by hardware interrupts, at fixed
priorities above that of task threads.
There is a designated \main\ function which is run by the
\main\ thread which is the only thread running initially.
The \main\ thread may create other task threads and then ``\start''
the scheduler, at which point the created threads and ISR threads are
enabled.
The scheduler runs the task threads according to a
highest-priority-first basis and time-slices within threads of the same
priority. ISR threads can be triggered at any point of time,
preempting task threads or lower priority ISR threads.

 
The thread functions can use a variety of commands, listed in
Tab.~\ref{tab:instr}, to perform computation or influence the way they
are scheduled.
Task threads are created using the \create\ command. The command
creates a new thread, which runs the specified
task function at the specified priority.
High priority threads share execution time 
with low priority threads
using the \cmd{set$\_$priority}, \suspend, and \block\ commands. 
These commands can lead to re-scheduling of the threads, thereby giving
other threads a chance to execute. The \cmd{set$\_$priority} command
sets the priority of 
a task thread, \suspend\ suspends the execution of a task thread, and \block\ 
(representing blocking commands like ``delay'' or ``receive message'')
blocks the execution of the current task thread.
A suspended task thread can be resumed 
with the \resume\ command. A blocked task thread resumes after a
non-deterministic amount of time.
Task threads can suspend and resume the scheduler with 
\suspendsched\ and \resumesched, respectively.
When the scheduler is suspended the currently running task thread
can be preempted only by an ISR thread, and not by other task threads.
Threads can also
disable and enable interrupts with \disable\ and 
\enable, respectively.
When interrupts are disabled, no preemption can occur.
Tasks can synchronize accesses to shared
variables by acquiring and releasing locks with \lock\ and
\unlock\ commands, respectively. 

\begin{figure}
 \begin{subfigure}[b]{.73\linewidth}
  \begin{scriptsize}
   \begin{verbatim}
main:                  prod:                   cons:
1. item:=0;            10. for(; ;) {          20. for(; ;) {
2. count:=0;           11.   suspend(t2);      21.   temp:=item;
3. create(prod,1,t1);  12.   item:=5;          22.   set_priority(t2,2);
4. create(cons,1,t2);  13.   count:=count+1;   23.   count:=count-1;
5. start;              14.   resume(t2);       24.   set_priority(t2,1);
6.                     15. }                   25. }
                       16.                     26.
   \end{verbatim}
  \end{scriptsize}
  \caption{Example IDA program}
 \end{subfigure}%
 \begin{subfigure}[b]{.30\linewidth}
   \includegraphics[scale=0.6]{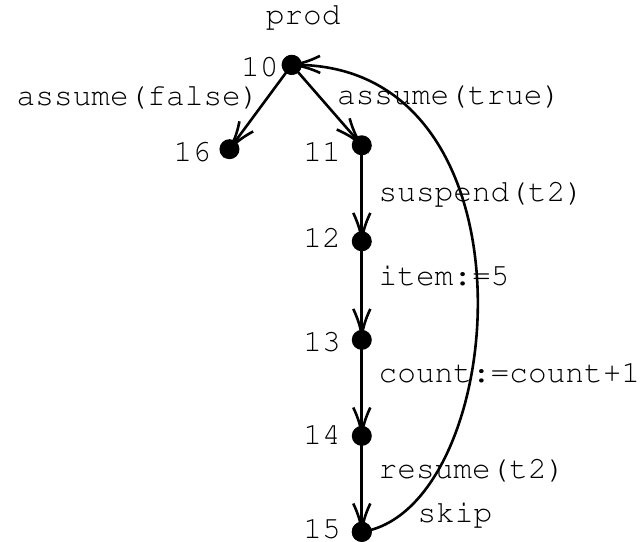}
    \caption{CFG of \texttt{prod} }
 \end{subfigure}
\caption{Example program and the CFG representation of \texttt{prod}}\label{fig:IDA-ex}
\end{figure}

\label{sub:syn}
More formally, an IDA program $P$ is a triple
$\langle \vars, \locks, \codes \rangle$ 
where $\vars$ is a finite set of integer-valued global variables,
$\locks$ is a finite set of locks, and $\codes$ is a finite set of
thread function names, with a designated one called \main.
Each function $A$ in $\codes$ has an associated Control Flow Graph
(CFG) $G_{A}=(\locs_{A}, \ent_A, \exit_A, \Instr_{A})$, 
where $\locs_{A}$ is the set of \emph{locations},
$\ent_A$ and $\exit_A$ are respectively the \emph{entry} and \emph{exit}
locations in $\locs_A$, and
$\Instr_{A} \subseteq \locs_{A} \times
\command(\vars,\locks) \times
\locs_{A}$ is the set of \emph{instructions} of the CFG.
Here $\command(\vars,\locks)$ is the set of commands in
Tab.~\ref{tab:instr} over the
variables $\vars$ and locks $\locks$.
Each function $A$ in $\codes$ also has an associated \emph{type},
$\type(A)$, which is one of \taskfn\ or \isrfn.
While task threads are created during execution at priorities
specified in the \create\ command, ISR threads run at a fixed static
priority.
We assume that during execution task threads can have priorities upto a
constant value $m \in \nat$ (which we fix for all IDA programs),
while ISR threads have distinct
priorities which are greater than $m$.
If $\{f_1, \ldots, f_k\}$ are the functions of type \isrfn, then
without loss of generality we assume their priorities to be $m+1,
\ldots, m+k$ respectively.
The IDA version of the FreeRTOS application from
Fig.~\ref{fig:prod-cons-freertos} is shown in Fig.~\ref{fig:IDA-ex}.

Some notation will be useful going forward. For 
a program $P$, the instructions of $P$, denoted $\Instr_{P}$, is the union of 
instructions in the thread functions of $P$, and locations in $P$,
denoted $\locs_{P}$, 
is the union of the locations in the thread functions of $P$.
An IDA program allows
standard integer and Boolean expressions over $\vars$.
For an integer expression $e$, Boolean expression $b$, and an environment 
$\env$ for $\vars$, $\llbracket e \rrbracket_{\env}$ denotes the integer
value that $e$ evaluates to in $\env$, and $\llbracket b \rrbracket_{\env}$
denotes the Boolean value that $b$ evaluates to in $\env$. For a map 
$f: X \rightarrow Y$ and elements $x,y$ which may or may not be in
$X$ or $Y$, we use the notation $f[x \mapsto y]$ to denote
the function $f':X \union \{x\} \rightarrow Y \union \{y\}$ given by
$f'(x)=y$ and for all $z$ different from $x$, $f'(z) = f(z)$.

\begin{table}[h]
  \caption{IDA Basic Commands}\label{tab:instr}
  \begin{small}
    \begin{tabular}[c]{|l|l|}
     \hline
        \textbf{Command} & \textbf{Description} \\ \hline
        \myskip & Do nothing. \\ \hline        

        \var{x} := \var{e} & Assign the value of expression \var{e} to variable \var{x}. \\ \hline
        
        \assume(\var{b}) & Enabled only if expression \var{b} evaluates to 
                                \textit{true}; does nothing. \\ \hline                              
        
        \create($A, p, t$) & Create task thread with func $A$,
        prio $p$, and store thread id in variable $t$. \\ \hline
        
        \cmd{set$\_$priority}($t, p$) & Set priority of task thread
        $t$ to $p$. When the first parameter is NULL,  \\
        & set priority of current thread. Allowed only in task function. \\  \hline
        
        \suspend($t$) & Suspend task thread $t$. When the parameter is 
        NULL, suspend \\ 
        & current thread. Allowed only in task function. \\ \hline
        
        \resume($t$) & Resume task thread $t$.  Allowed only in task
        function. \\ \hline
        
        \suspendsched & Suspend scheduler. Disables switching to other
        task threads. \\ \hline
        
        \resumesched & Resume the scheduler. Enables switching to other task
        threads. \\ \hline
        
        \disable & Disable interrupts and suspend the scheduler. \\ \hline
        
        \enable & Enable interrupts and resume the scheduler. \\ \hline
        
        \lock($l$) & Acquire lock $l$. Blocks if $l$ is not available. \\ \hline                                  
        \unlock($l$) & Release lock $l$. \\ \hline
        
        \block & Block the current task thread. Re-enable after
        non-deterministic delay. \\ \hline
                    
        \start & Start scheduler and enable interrupts. Called only by \main. \\ \hline                
     \end{tabular}
  \end{small}
\end{table}

\label{sub:sem}
We can now define the semantics of an IDA program $P = \langle \vars,
\locks, \codes \rangle$ as a
labeled transition system
$\langle\states, \Sigma, \trel, s_{0} \rangle$,
whose components are defined as follows.
Let $f_1, \ldots, f_k$ be the thread functions of type \isrfn, with
priorities $m+1, \ldots, m+k$ respectively.

The set of states $\states$ contains tuples of the form $s = \langle \blocked, \suspended, \ready, \pmap, \acquired, \fmap, \pc, \env, r, i, \sd, \id \rangle$, where 
\begin{itemize}
\item $\blocked$, $\suspended$, and $\ready$ are sets of thread ids
  (which we assume to be simply integers)
  representing the set of \emph{blocked} task threads, \emph{suspended}
  task threads, and \emph{ready} task threads, respectively.
  The sets $\blocked$, $\suspended$, and $\ready$ are pairwise
  disjoint.
  We denote the set of threads created so far by $\threads = \blocked \union
       \suspended \union \ready$.
       
 \item $\pmap: \threads \rightarrow \nat$ gives the
       current priority of each thread.
       
 \item $\acquired: \locks \rightharpoonup \threads$ is a partial map giving
       us the thread that has acquired a particular lock.
       
 \item $\fmap: \threads \rightarrow \codes$ gives the function 
       associated with a thread.
       
 \item $\pc: \threads \rightarrow \locs_{P}$ gives the
   current location of a thread $t$ in the CFG of $\fmap(t)$.
       
 \item $\env \in \vars \rightarrow \ints$ is a valuation for the variables.
 
 \item $r \in \ready$ is the currently running thread, while 
       $i \in \ready$ is the interrupted task thread.
       
 \item $\sd$ is a Boolean value indicating whether the scheduler is 
       suspended ($\sd=\true$) or not, while 
       $\id$ is a Boolean value indicating whether interrupts are
       disabled ($\id=\true$) or not.

\end{itemize}
The initial state $s_{0}$ is $\langle 
 \emptyset, \ 
 \{1,\ldots, k\}, \ 
 \{0\}, \ 
 \{0 \mapsto 0, 1 \mapsto m+1, \ldots, k \mapsto m+k \}, \  
 \emptyset, \ $\\
 $\{0 \mapsto \main, 1 \mapsto f_1, \ldots, k \mapsto f_k\}, 
 \lambda t \in \threads. \ent_{\fmap(t)}, \ 
 \lambda x \in V. 0,\ 
 0,\ 
 0,\ 
 \true,\ 
 \true 
 \rangle$.
Thus initially, no threads are blocked, ISR
threads $1, \ldots, k$ (with priorities $k+1, \ldots, k+n$
respectively) are disabled, and the main thread $0$ with priority $0$ is
ready and also running. No locks are acquired. The threads 
$0,1,\ldots, k$ are associated with their functions.
All the threads are at their entry locations
and all variables are initialized to zero. The interrupted thread is taken 
to be $0$, the scheduler is suspended, and interrupts are disabled.

The transition relation $\trel$ is given as follows. 
Consider a state $s$ expressed as the tuple $s = \langle \blocked, \suspended, \ready, \pmap,
\acquired, \fmap, \pc, \env, r, i, \sd, \id \rangle$,
a thread $t \in \threads$,
and an instruction $\instr = (l, c, l')$ in $\fmap(t)$.
Then we have $s \trel_{\instr} s'$
iff one of the following rules is satisfied.
Each rule says that if the conditions on command $c$ and state $s$, specified
in the antecedent of a rule (above the line), hold then $s \trel_{\instr} s'$,
specified in the consequent of the rule (below the line), holds.
We use $\task{t}$ to indicate that $t$ is a task thread (i.e.\@
($\type(t) = \taskfn$) and $\isr{t}$ to indicate that $t$ is an ISR thread.

In the interest of space, only few rules are shown here. The full semantics
can be found in Arxiv. The ASSIGN
is a simple rule on assignment statement. The ASSIGN-INT rule shows how 
interrupts are handled while CREATE-CS and CREATE-NS rules show how the 
execution of a statement can lead to context switch and no switch, 
respectively, and the START rule shows how the threads 
get running. 
For the 
ASSIGN-INT rule given below, the condition $\pc(t)=l=\ent_{ \fmap(t)}$ should 
hold while for others $\pc(t)=l$ needs to be true.
\\[0.5cm]
\scalebox{0.8}{
\[
\hspace*{-0.5cm}
 \begin{prooftree}
  \Hypo{c=\var{x}:=\var{e} \indent t=r}
  
  \Infer1[\tiny ASSIGN]{s \trel_{\instr} \langle \blocked, \suspended, \ready, \pmap, \acquired, \fmap, \pc[t \mapsto l'], \env[x \mapsto \llbracket e \rrbracket_{\env} ], r, i, \sd, \id \rangle}
 \end{prooftree}
\]
}
\\[0.5cm]
\scalebox{0.8}
{
\[
 \hspace*{-0.5cm}
 \begin{prooftree}
  \Hypo{c=\var{x}:=\var{e} \indent t \in \ready \indent \isr{t} \indent t \neq r \indent \pmap(t) > \pmap(r) \indent \id = \false} 
  
  \Infer1[\tiny ASSIGN-INT]{s \trel_{\instr} \langle \blocked, \suspended, \ready, \pmap, \acquired, \fmap, \pc[t \mapsto l'], \env[x \mapsto \llbracket e \rrbracket_{\env} ], t, r, \sd, \id \rangle}
 \end{prooftree}
\]
}
\\[0.5cm]
\scalebox{0.8}
{
\[
 \begin{prooftree}
  \Hypo{ { c=\create(\var{A}, \var{p}, \var{v}) \smspace t=r \smspace
      \task{t} \smspace A \in \codes \smspace \type(A) = \taskfn \smspace ts \notin \threads \smspace (p \leq \pmap(r)  \vee  (\sd \vee \id) = \true) }} 
  
  \Infer1[\tiny CREATE-NS]{s \trel_{\instr} \langle \blocked,
    \suspended, \ready \union \{ts\}, \pmap[ts \mapsto p], \acquired,
    \fmap[ts \mapsto A], \pc[t \mapsto l', ts \mapsto \ent_{A}], \env[v \mapsto ts], r, i, \sd, \id \rangle}
 \end{prooftree}
\]
}
\\[0.5cm]
\scalebox{0.8}{
\[
 \begin{prooftree}
  \Hypo{{ c=\create(\var{A}, \var{p}, \var{v}) \smspace t=r \smspace \task{t} \smspace A \in \codes \smspace \type(A) = \taskfn \smspace  ts \notin \threads \smspace p > \pmap(r) \smspace (\sd \vee \id)=\false  }}
  
  \Infer1[\tiny CREATE-CS]{s \trel_{\instr} \langle \blocked, \suspended, \ready \union \{ts\}, \pmap[ts \mapsto p], \acquired, \fmap[ts \mapsto A], \pc[t \mapsto l', ts \mapsto \ent_{A}], \env[\var{v} \mapsto ts], ts, i, \sd, \id \rangle}
 \end{prooftree}
\]
}
\\[0.5cm]
\scalebox{0.8}{
\[
 \begin{prooftree}
  \Hypo{{ c=\start \smspace t=r=0 \smspace  (\sd \vee \id)=\false \smspace \exists ts \in (\suspended \union \ready). \task{ts} \ \wedge \ \pmap(ts) = max(\{\pmap(u) | u \in \suspended \union \ready \ \wedge \ \task{u} \}) }}
  
  \Infer1[\tiny START]{s \trel_{\instr} \langle \blocked, \emptyset, \suspended \union \ready, \pmap, \acquired, \fmap, \pc[t \mapsto l'], \env, ts, i, \false, \false \rangle }
 \end{prooftree}
\]
}
\\[0.5cm]

An \emph{execution} $\sigma$ of $P$ is a finite sequence of transitions in the 
transition system defined.
$\sigma = \tau_{0}, \tau_{1}, \cdots, \tau_{n}$, where $n \geq 0$ and
there exists a finite sequence of states $s_{0}, s_{1}, \cdots, s_{n+1}$
in $\states$ such that $s_{0}$ is the initial state and
$\tau_{i} = s_{i} \trel s_{i+1}$ for each $0 \leq i \leq n$.

\section{Data Races and the Occur-in-Between Relation}
\label{sec:race}

We use the notion of data races introduced by Chopra et al
\cite{ChopraPD19}, which is a general notion that applies to programs
with non-standard synchronization mechanisms. The definition
essentially says that two statements
in a program race if (a) they are conflicting accesses to a memory
location and (b) they may happen in parallel, in that notional
``\myskip\ blocks'' around these statements overlap with each other in
some execution of the program.
The definition is
\begin{wrapfigure}[6]{r}{0.3\textwidth}
  \begin{picture}(0,0)%
\includegraphics{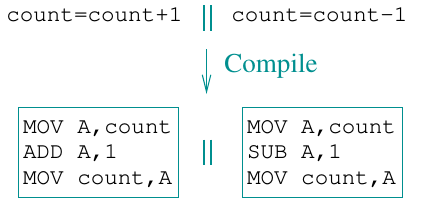}%
\end{picture}%
\setlength{\unitlength}{2693sp}%
\begingroup\makeatletter\ifx\SetFigFont\undefined%
\gdef\SetFigFont#1#2#3#4#5{%
  \reset@font\fontsize{#1}{#2pt}%
  \fontfamily{#3}\fontseries{#4}\fontshape{#5}%
  \selectfont}%
\fi\endgroup%
\begin{picture}(2984,1512)(2254,-3248)
\end{picture}%

\end{wrapfigure}
meant to capture the fact that when these two
statements are compiled down to instructions of a processor, the
interleaving of these instructions may lead to undesirable behaviours
of the program which don't correspond to any sequential execution of
the two statements.
For example in the figure alongside, the conflicting accesses to
\texttt{count} may get compiled to the instructions shown, and the
interleaving of these two blocks of instructions may lead to
unexpected results like \texttt{count} getting decreased by 1 despite
both blocks having completed.

We now define these notions more formally in our setting.
Let us fix an IDA program $P$.
Let $s_1$ and $s_2$ be two instructions in $P$, with associated
commands $c_1$ and $c_2$ respectively.
We restrict ourselves to the case where $c_1$ and $c_2$ are assignment
or assume statements.
We say $s_1$ and $s_2$ are \emph{conflicting accesses} to a variable $x$ if
they both access $x$ and at least one of them writes $x$.
Let $P_{s_1}$ denote the program
obtained from $P$ by 
inserting \myskip\ statements immediately before and after $s_1$.
Similarly, let $P_{s_1,s_2}$ denote the program
obtained from $P$ by 
inserting \myskip\ statements immediately before and after both $s_1$ and
$s_2$.
We say $s_1$ and $s_2$ \emph{may happen in parallel} (MHP) in $P$ if
there is an execution of $P_{s_1,s_2}$ in which the two \myskip-blocks
interleave (i.e.\@ one block begins in between the other).
These terms are illustrated in Fig.~\ref{fig:race-def}. We use the
convention that $A$ and $B$ represent the static thread functions,
while $t_A$ and $t_B$ represent dynamic threads that run the functions
$A$ and $B$ respectively, with an optional subscript indicating the
priority at which the thread was created.
Finally we say $s_1$ and $s_2$ are involved in a \emph{data-race} (or simply
are \emph{racy}) in $P$, if
they are conflicting accesses that may happen in parallel in $P$.

\begin{figure}
  \begin{center}
    \begin{picture}(0,0)%
\includegraphics{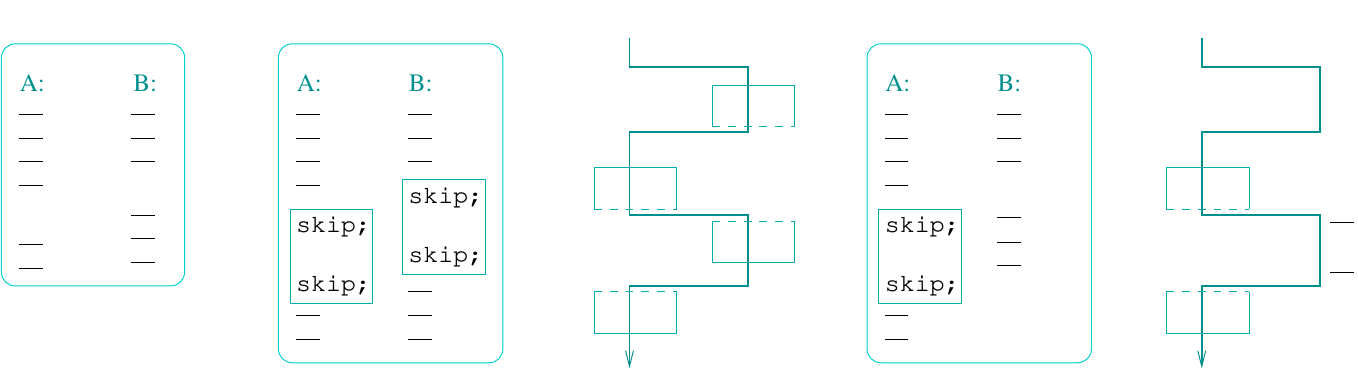}%
\end{picture}%
\setlength{\unitlength}{2486sp}%
\begingroup\makeatletter\ifx\SetFigFont\undefined%
\gdef\SetFigFont#1#2#3#4#5{%
  \reset@font\fontsize{#1}{#2pt}%
  \fontfamily{#3}\fontseries{#4}\fontshape{#5}%
  \selectfont}%
\fi\endgroup%
\begin{picture}(10329,2812)(1479,-2603)
\put(3736, 29){\makebox(0,0)[lb]{\smash{{\SetFigFont{7}{8.4}{\rmdefault}{\mddefault}{\updefault}{\color[rgb]{0,0,0}$P_{s_1,s_2}$}%
}}}}
\put(3736,-1771){\makebox(0,0)[lb]{\smash{{\SetFigFont{7}{8.4}{\ttdefault}{\mddefault}{\updefault}{\color[rgb]{0,0,0}$s_1$;}%
}}}}
\put(4591,-1546){\makebox(0,0)[lb]{\smash{{\SetFigFont{7}{8.4}{\ttdefault}{\mddefault}{\updefault}{\color[rgb]{0,0,0}$s_2$;}%
}}}}
\put(1626,-1456){\makebox(0,0)[lb]{\smash{{\SetFigFont{7}{8.4}{\ttdefault}{\mddefault}{\updefault}{\color[rgb]{0,0,0}$s_1$;}%
}}}}
\put(1556, 39){\makebox(0,0)[lb]{\smash{{\SetFigFont{7}{8.4}{\rmdefault}{\mddefault}{\updefault}{\color[rgb]{0,0,0}$P$}%
}}}}
\put(6276, 74){\makebox(0,0)[b]{\smash{{\SetFigFont{7}{8.4}{\rmdefault}{\mddefault}{\updefault}{\color[rgb]{0,.56,.56}$t_A$}%
}}}}
\put(7176, 74){\makebox(0,0)[b]{\smash{{\SetFigFont{7}{8.4}{\rmdefault}{\mddefault}{\updefault}{\color[rgb]{0,.56,.56}$t_B$}%
}}}}
\put(10636, 74){\makebox(0,0)[b]{\smash{{\SetFigFont{7}{8.4}{\rmdefault}{\mddefault}{\updefault}{\color[rgb]{0,.56,.56}$t_A$}%
}}}}
\put(11536, 74){\makebox(0,0)[b]{\smash{{\SetFigFont{7}{8.4}{\rmdefault}{\mddefault}{\updefault}{\color[rgb]{0,.56,.56}$t_B$}%
}}}}
\put(8221, 29){\makebox(0,0)[lb]{\smash{{\SetFigFont{7}{8.4}{\rmdefault}{\mddefault}{\updefault}{\color[rgb]{0,0,0}$P_{s_1}$}%
}}}}
\put(8221,-1771){\makebox(0,0)[lb]{\smash{{\SetFigFont{7}{8.4}{\ttdefault}{\mddefault}{\updefault}{\color[rgb]{0,0,0}$s_1$;}%
}}}}
\put(9076,-1226){\makebox(0,0)[lb]{\smash{{\SetFigFont{7}{8.4}{\ttdefault}{\mddefault}{\updefault}{\color[rgb]{0,0,0}$s_2$;}%
}}}}
\put(2476,-1276){\makebox(0,0)[lb]{\smash{{\SetFigFont{7}{8.4}{\ttdefault}{\mddefault}{\updefault}{\color[rgb]{0,0,0}$s_2$;}%
}}}}
\put(11606,-1696){\makebox(0,0)[lb]{\smash{{\SetFigFont{7}{8.4}{\ttdefault}{\mddefault}{\updefault}{\color[rgb]{0,0,0}$s_2$}%
}}}}
\end{picture}%

  \caption{A program $P$; its transformation $P_{s_1,s_2}$; an
    execution of $P_{s_1,s_2}$ in which the skip blocks overlap and
    witnesses that $s_1$ and $s_2$ MHP in $P$; the
    program $P_{s_1}$; and an execution of $P_{s_1}$ which witnesses
    occurrence of $s_2$ in between $s_1$.}
  \label{fig:race-def}
  \end{center}
\end{figure}

It will be convenient for us to use a stronger notion than MHP called
``occurs-in-between'' while reasoning about IDA programs.
Once again, if $s_1$ and $s_2$ are statements in $P$, we say that
$s_2$ can \emph{occur-in-between} $s_1$ if there is an execution of
$P_{s_1}$ in which $s_2$ occurs sometime between
the first \myskip\ and the second \myskip\ around $s_1$.
In this case we write $s_1 \oibsymb s_2$, and $s_1 \noibsymb s_2$
otherwise.
The definition of $s_1 \oibsymb s_2$ is
illustrated in the right side of Fig.~\ref{fig:race-def}.
While it is immediate that if $s_2$ occurs in between $s_1$ then they
also MHP, a weaker version of the converse is also true:

\begin{proposition}
  \label{prop:oib-mhp}
Let $s_1$ and $s_2$ be two statements in an IDA program $P$. Then
$s_1$ MHP $s_2$ iff either $s_1$ occurs in between $s_2$ or $s_2$
occurs in between $s_1$.
\qed
\end{proposition}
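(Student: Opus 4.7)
The plan is to prove each direction separately; the $\Leftarrow$ direction is a straightforward skip-insertion argument, while the $\Rightarrow$ direction requires a short case analysis together with one subcase that exploits scheduler non-determinism.

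For $\Leftarrow$, suppose without loss of generality that $s_2 \oibsymb s_1$ and let $\sigma$ be an execution of $P_{s_1}$ in which $s_2$ is executed strictly between the two \myskip\ steps surrounding $s_1$. I will lift $\sigma$ to an execution $\sigma'$ of $P_{s_1,s_2}$ by inserting the first \myskip\ of $s_2$ immediately before the step that executes $s_2$, and the second \myskip\ of $s_2$ immediately after. Because \myskip\ only advances the pc of the running thread and leaves every other semantic component (the valuation, the priority map, the ready/blocked/suspended sets, $\sd$ and $\id$) untouched, the inserted transitions are legal and do not invalidate the subsequent steps of $\sigma$. In $\sigma'$ the first \myskip\ of $s_2$ then lies strictly between the two \myskip\ steps of $s_1$, so the two skip blocks interleave and $s_1$ MHP $s_2$.

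For $\Rightarrow$, let $\sigma'$ be an execution of $P_{s_1,s_2}$ whose skip blocks interleave, and assume without loss of generality that the first \myskip\ of $s_2$ is executed strictly between the two \myskip\ steps of $s_1$. If in addition $s_2$ itself is executed before the second \myskip\ of $s_1$, then deleting the two \myskip\ steps around $s_2$ from $\sigma'$ yields a valid execution of $P_{s_1}$ (by the same stuttering argument as above) in which $s_2$ still lies between the two \myskip\ steps of $s_1$, giving $s_1 \oibsymb s_2$. Symmetrically, if $s_1$ is executed strictly between the two \myskip\ steps of $s_2$, deleting the \myskip\ steps that were never introduced around $s_1$ in $P_{s_1,s_2}$ (or rather, using $\sigma'$ directly as an execution of $P_{s_2}$) gives $s_2 \oibsymb s_1$.

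The delicate case is when $s_1$ is executed before the first \myskip\ of $s_2$ and $s_2$ is executed after the second \myskip\ of $s_1$, so the blocks properly overlap yet neither statement lands strictly inside the other's block. Here I will rearrange $\sigma'$ rather than project. Let $\tau$ be the state reached immediately after the first \myskip\ of $s_2$ fires; at $\tau$ the running thread is $t_B$ and $\pc(t_B)$ is the location of $s_2$, which is assign or assume by hypothesis. The fact that $\sigma'$ later executes $s_2$ from a state only differing at $t_A$'s pc shows that the enabling conditions of the ASSIGN/ASSUME rule for $t_B$ already hold at $\tau$, so one can take a single $t_B$-transition from $\tau$ executing $s_2$, and then follow it by the same switch to $t_A$ that $\sigma'$ used (available because $t_B$ was allowed to run while $t_A$ was ready, so the switch is a non-deterministic round-robin choice rather than a priority-forced one). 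Removing the \myskip\ steps around $s_2$ from this rearranged execution produces the required witness of $s_1 \oibsymb s_2$.

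The main obstacle is precisely this last case: it cannot be handled by the purely syntactic insertion or deletion of \myskip\ steps used in the other cases, and hinges on showing that no rule of the IDA semantics forces $t_A$ to take the next transition from $\tau$. Verifying this amounts to a careful inspection of which transitions are enabled at $\tau$ and confirming that the switch away from $t_B$ is optional and may be deferred past one further $t_B$-step.
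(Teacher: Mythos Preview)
The paper states this proposition without proof (it carries only a \qed), so there is no ``paper's own proof'' to compare against; your attempt already goes well beyond what the paper provides. Your $\Leftarrow$ direction is fine apart from a small notational slip (you write $s_2 \oibsymb s_1$ but then describe a witness of $s_1 \oibsymb s_2$; recall that $s_1 \oibsymb s_2$ means ``$s_2$ occurs in between $s_1$'' and is witnessed in $P_{s_1}$). The first two subcases of $\Rightarrow$ are also fine.

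The gap is in your ``delicate case''. You claim that $\sigma'$ later executes $s_2$ ``from a state only differing at $t_A$'s pc'' from $\tau$, and use this to conclude $s_2$ is already enabled at $\tau$. That claim is not justified: between $\tau$ (just after $\myskip_2^1$) and the step that fires $s_2$, the execution $\sigma'$ may contain not only $\myskip_1^2$ but arbitrarily many further steps of $t_A$ (after $\myskip_1^2$) and of third threads, any of which can modify the environment $\phi$. If $s_2$ is an \assume$(b)$, the truth of $b$ at the later state tells you nothing about its truth at $\tau$; for instance, if $t_A$'s code is $s_1;\ y{:=}1$ and $s_2$ is \assume$(y{=}1)$, then $s_2$ is disabled at $\tau$ but enabled only after $t_A$ has passed $\myskip_1^2$. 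Your related claim that the subsequent switch back to $t_A$ is necessarily a round-robin (TSHARE) step is also too strong: it could equally be an UNBLK-CS of a higher-priority task or an ISR firing. So the rearrangement you describe does not go through as written. The case needs a more careful argument (for example, observing that in the overlap one of the two \emph{skip} steps---which are always enabled---can be commuted across the boundary, rather than trying to pull $s_2$ itself forward), or else a separate treatment of the \assume\ case.
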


Thus to conclude that $s_1$ and $s_2$ cannot MHP (and hence not race)
it is enough to show that $s_1$ and $s_2$ cannot occur in between
each other.

\section{Occur-In-Between Rules}
\label{sec:rules}

In this section we focus on statically computing a conservative (i.e.\@ under-)
approximation of the \emph{cannot}-occur-in-between relation for an
IDA program, by giving rules for identifying this
\begin{wrapfigure}[6]{r}{0.35\textwidth}
  \begin{picture}(0,0)%
\includegraphics{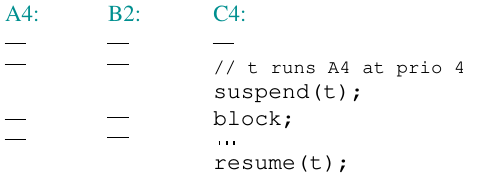}%
\end{picture}%
\setlength{\unitlength}{2279sp}%
\begingroup\makeatletter\ifx\SetFigFont\undefined%
\gdef\SetFigFont#1#2#3#4#5{%
  \reset@font\fontsize{#1}{#2pt}%
  \fontfamily{#3}\fontseries{#4}\fontshape{#5}%
  \selectfont}%
\fi\endgroup%
\begin{picture}(4133,1459)(8184,-1760)
\put(8221,-1091){\makebox(0,0)[lb]{\smash{{\SetFigFont{7}{8.4}{\ttdefault}{\mddefault}{\updefault}{\color[rgb]{0,0,0}$s_1$;}%
}}}}
\put(9076,-1106){\makebox(0,0)[lb]{\smash{{\SetFigFont{7}{8.4}{\ttdefault}{\mddefault}{\updefault}{\color[rgb]{0,0,0}$s_2$;}%
}}}}
\end{picture}%

\end{wrapfigure}
relation.
To illustrate the typical issues we need to keep in mind while framing
these rules, consider the example program alongside.
Task threads $A4$, $B2$, and $C4$ are created at priority 4,2, and 4
respectively.
In the normal course statement $s_2$ in $B2$ would not be able to
occur in between $s_1$ in $A4$ as $A4$ runs at a higher priority than
$B2$. However, (the thread that runs) $C4$ may suspend $A4$ just
before it executes $s_1$, block itself, and allow $B2$ to run.
Thus $s_2$ can occur in between $s_1$.

We will make use of the following
terminology for an IDA program $P$.
Let $s$ be a statement in thread function $A$ in $P$.
We say $s$ may run at priority $p$ if there is an execution of $P$ in
which a thread $t$ runs $A$ and executes statement $s$ at a priority
of $p$.
We say $(p,q)$ is the dynamic priority of $s$ if $p$ and $q$ are
respectively the
minimum and maximum priorities that $s$ can run at.
Similarly, we say that the dynamic priority of a thread function $A$
(or a block of code in $A$) is $(p,q)$ if $p$ and $q$ are respectively
the minimum and maximum priorities at which any statement in $A$ (or
the block of $A$) can run.
Finally, we say that a task function $A$ \emph{may suspend} another task
function $B$ in $P$, if $A$ contains a statement of the form
\suspend$(t)$, and there is an execution of $P$ in which the statement
is executed when the thread
id $t$ points to task function $B$ (that is $t$ runs task $B$).
We say the statement $\suspend(t)$ in $P$ \emph{must suspend} (or
simply \emph{suspends}) a task
$B$ if $t$ takes on a unique thread id at this point along any
execution of $P$, and this thread id is the only thread that runs $B$.
In this case, we will denote such a statement by \suspend$(B)$.

We now proceed to propose sufficient conditions under which one
statement in an IDA program cannot occur-in-between another statement
in the program.
Let us fix an IDA program $P$. 
Let $s_1$ and $s_2$ be statements in thread functions
$A$ and $B$ respectively ($A$ and
$B$ could be the same thread function).
The following conditions (C1)--(C6) below are meant to be sufficient
conditions that ensure that $s_2$ cannot occur in between $s_1$.
In the rules below, by a statement $s$ in a thread function $A$
being enclosed in a
\suspend-\resume\ block we mean there is a path in the CFG of $A$
which contains $s$, begins with a \suspend,
ends with a \resume, and has \emph{no}
intervening \resume\ statement; and similarly for other kinds of
blocks.
Each of these rules is illustrated in Fig.~\ref{fig:occur-in-between}.
\begin{itemize}
\item \textbf{C1} (Suspend Task): Each of the following conditions must hold:
  \begin{itemize}
  \item $s_1$ is enclosed in a $\suspend(B)$-$\resume(B)$
  block with dynamic priority $(p,q)$; 
  \item there is no task with maximum dynamic priority greater than
    or equal to $p$, that can resume $B$;
  \item Either no blocking statement in the $\suspend(B)$-$\resume(B)$
    block, or no other task that can resume $B$.
  \end{itemize}

\item \textbf{C2} (Priority): Each of the conditions below must hold:
  \begin{itemize}
  \item The dynamic priorities of $s_1$ and $s_2$ are $(p_1,q_1)$ and
    $(p_2,q_2)$ respectively, with $p_1 > q_2$.
  \item There is no thread body with maximum dynamic priority greater than or
    equal to $p_1$ that can suspend $A$.
  \end{itemize}

\item \textbf{C3} (Flag): Each of the conditions below must hold:
  \begin{itemize}
  \item $s_1$ is enclosed in a block $F$ begining with setting the
    variable \texttt{flag} to 1 and ending with resetting it to 0,
    with dynamic priority of the block being $(p_1,q_1)$.
  \item The block $F$ is either in the scope of a \suspendsched\ command or there is no thread of priority $\geq p_1$ that resets
    \texttt{flag}.
  \item Either there is no blocking command before $s_1$ in $F$, or no
    thread that can reset flag.
  \item $s_2$ is in an \texttt{if-then} block which checks that
    \texttt{flag} is not set, with the block having dynamic priority
    $(p_2,q_2)$.
  \item $q_1 < p_2$.
  \end{itemize}


\item \textbf{C4} (Lock): Each of $s_1$ and $s_2$ are within a
  \lock$(l)$-\unlock$(l)$ block, for some common lock $l$.

\item \textbf{C5} (Disable Interrupts): $s_1$ is within a
  \disable-\enable\ block.

\item \textbf{C6} (Suspend Scheduler): $s_1$ is within a
  \suspendsched-\resumesched\ block in a task function, and $s_2$ is
  in a task function.
\end{itemize}

\begin{theorem}
\label{thm:noib-soundness}
Let $P$ be an IDA program, and let $s_1$ and $s_2$ be statements in $P$
that satisfy one of the conditions (C1) to (C6) above. Then $s_1
\noibsymb s_2$ in $P$. 
\end{theorem}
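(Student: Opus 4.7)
My proof plan is to proceed case by case, handling each of the six conditions independently. In each case I argue by contradiction: I suppose there is an execution $\sigma$ of $P_{s_1}$ in which $s_2$ occurs strictly between the two \myskip\ statements inserted around $s_1$, and I use the operational semantics of Section~\ref{sec:language} to derive a contradiction by showing no sequence of transition rules can fire that takes the state from the first \myskip\ to $s_2$'s execution without first passing through the second \myskip. Let $t_A$ denote the thread that runs $A$ during the relevant window, and let $t_B$ denote any thread that could potentially execute $s_2$.

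I would dispatch the easier cases C4, C5, and C6 first. For C5, along $\sigma$ between the two surrounding \myskip s, the flag $\id$ stays \true, since no \enable\ can be executed by $t_A$ before its second \myskip. Inspection of the rules (ASSIGN-INT, CREATE-CS, and their analogues) shows every transition that changes the running thread has $\id=\false$ as a precondition, so $t_A$ remains the running thread throughout and $s_2$ cannot fire. Case C6 is analogous, using $\sd=\true$ and the fact that $s_2$ lies in a task function so the only rules that could schedule $t_B$ require $\sd=\false$. Case C4 is the standard lockset argument: the map $\acquired$ has $\acquired(l)=t_A$ throughout the window, and any rule for \lock$(l)$ requires $l \notin \mathrm{dom}(\acquired)$, so $t_B$ cannot cross its \lock$(l)$ and therefore cannot reach $s_2$.

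Case C2 is the priority argument. The dynamic-priority pre-analysis gives $\pmap(t_A) \geq p_1$ throughout the window, and $\pmap(t_B) \leq q_2 < p_1$ at the moment $s_2$ would fire. For $t_B$ to become running, either $t_A$ must be suspended, must block, or must lower its own priority below $p_1$. The dynamic-priority guarantee rules out the last option, a \block\ command would not let a lower-priority $t_B$ run while $t_A$'s priority is preserved, and the second bullet of C2 excludes any thread of priority $\geq p_1$ that could issue \suspend$(A)$. Hence $t_B$ never becomes the running thread, contradicting the hypothesis.

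The main obstacle is C1 and C3, which combine several synchronization mechanisms. For C1, I would maintain the invariant that $t_B \in \suspended$ throughout the window. The pre-analysis and the must-suspend hypothesis give this at the first \myskip. The only way $t_B$ leaves $\suspended$ is via a \resume$(B)$ executed by some running thread $t'$; the second bullet rules out any such $t'$ of maximum dynamic priority $\geq p_1$ existing, and so $t'$ would have to preempt $t_A$ from below, which by the priority semantics requires $t_A$ to first block or be suspended. The third bullet excludes both alternatives by either banning blocking inside the \suspend-\resume\ block, or by again asserting no other task can resume $B$. Case C3 is similar but with the flag invariant: I show \texttt{flag}$=1$ throughout the window, using the scheduler-suspension scope or the priority clause $q_1 < p_2$ combined with the blocking-free condition, then observe that the \texttt{assume} at the head of $s_2$'s \texttt{if-then} block is disabled when \texttt{flag}$\neq 0$. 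The hardest step will be making the ``block and yield'' argument for C1 and C3 fully rigorous, since it requires ruling out chains in which $t_A$ blocks, a lower-priority task runs, and a chain of priority-raises or wake-ups eventually resumes $B$; the bullet-point conditions are exactly tailored to break each link in such a chain, and I would enumerate the chains explicitly when writing the formal proof.
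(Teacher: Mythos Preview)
Your overall strategy---case-by-case contradiction driven by the operational semantics---is exactly the paper's approach, and your treatment of C1, C2, C4, C5, C6 mirrors the paper's (it dismisses C4--C6 as ``standard'' and gives case enumerations for C1 and C2 essentially identical to yours). One small slip in C2: you claim a \block\ by $t_A$ ``would not let a lower-priority $t_B$ run,'' which is false under BLK-CS; the correct observation is simply that the window is \myskip; $s_1$; \myskip\ with $s_1$ an assignment or \assume, so $t_A$ cannot block inside the window at all. The paper's C2 argument accordingly never mentions blocking.

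There is, however, a genuine gap in your C3 sketch. Establishing that \texttt{flag}$=1$ holds throughout the window and that the guarding \assume\ is therefore disabled only prevents $t_B$ from \emph{entering} the \texttt{if-then} block during the window. It does not rule out the scenario in which $t_B$ passed the \assume\ (while \texttt{flag} was still $0$) \emph{before} $t_A$ set the flag, was then preempted inside the block prior to $s_2$, and later resumes during the window to execute $s_2$. The paper handles this case explicitly as its third bullet for C3: the condition $q_1 < p_2$ is used to argue that once $t_B$ is inside the block it runs at priority at least $p_2 > q_1$, so $t_A$ (at priority at most $q_1$) cannot preempt it by time-slice or by priority, and hence cannot reach the flag-set and $s_1$ while $t_B$ is parked inside the block. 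You invoke $q_1 < p_2$ only for maintaining the flag invariant; you will also need it---in the opposite direction, constraining $t_A$ rather than $t_B$---to close this ``already inside'' case.
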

\begin{proof}
We sketch here a proof of Thm.~\ref{thm:noib-soundness} on the
soundness of the conditions C1--C6.
Let $P$ be an IDA program with statements $s_1$ and $s_2$ satisfying
one of the conditions C1--C6. We need to argue that in each case $s_1
\noibsymb s_2$.
We focus on the first three rules C1--C3 since the remaining are more
standard and their soundness is easy to see.

\textbf{C1:}
Suppose $s_1$ and $s_2$ satisfy the condition C1, and suppose there is
an execution $\rho$ of $P$ in which $s_2$ occurs in between $s_1$.
Let us say $s_1$ is executed by thread $t_1$ and $s_2$ by thread $t_2$.
Then $s_2$ must happen some time after $t_2$ was suspended by
$t_1$, and before $s_1$ takes place.
The only way this can happen is if:
\begin{itemize}
\item Some thread $t_3$ with priority
\emph{greater than or equal to} $p_1$ resumes $t_2$.
But this is not possible since the condition says that there is \emph{no} other
task with dynamic priority greater than or equal to $p_1$ which can
resume $B$.
    \item $t_1$ makes a blocking call and another task runs and
      resumes $t_2$. However this is ruled out by the requirement that
      [there is no \block\ command before $s_{1}$] OR [there is no
        task other than $A$ which can resume $B$].
\end{itemize}

\smallskip
\textbf{C2:}
Suppose $s_1$ and $s_2$ satisfy the condition C2, and suppose there is
an execution $\rho$ of $P$ in which $s_2$ occurs in between $s_1$.
Let us say $s_1$ is executed by thread $t_1$ and $s_2$ by thread $t_2$.
Then thread $t_2$ must preempt thread $t_1$ during the execution of $s_1$.
The only way this can happen is if:
\begin{itemize}
\item $t_2$ with priority
\emph{greater than} $p_1$ was blocked. It runs and preempts $t_1$.
But this is not possible since the condition says that $p_{1}$ $>$ maximum dynamic priority of $t_2$.
\item $t_2$ has a priority
\emph{equal to} $p_1$ and $t_1$'s time slice expires and it gets
preempted by $t_2$.
Again, this is not possible since the condition says that $p_{1}$ $>$ maximum dynamic priority of $t_2$.
\item Some thread $t_3$ with priority
\emph{greater than or equal to} $p_1$ was blocked. It runs and suspends $t_1$. However this is ruled out by the requirement that
there is no task other than $t_1$ with maximum dynamic priority $\geq$ $p_1$, which can suspend $t_1$.
\end{itemize}

\smallskip
\textbf{C3:}
Suppose $s_1$ and $s_2$ satisfy the condition C3, and suppose there is
an execution $\rho$ of $P$ in which $s_2$ occurs in between $s_1$.
Let us say $s_1$ is executed by thread $t_1$ and $s_2$ by thread $t_2$.
Then $s_2$ must happen some time after $flag_1$ is set to 1 by $t_1$, and before $s_1$ takes place.
The only way this can happen is if:
\begin{itemize}
\item Some thread $t_3$ with priority
\emph{greater than or equal to} $p_1$ was blocked. It runs and resets $flag_1$ to 0. But this is not possible since the condition says that 
$s_1$ is either in the scope of a \suspendsched\ command or there is no thread of priority $\geq$ $p_1$ that resets $flag_1$.
\item $t_1$ makes a blocking call and another task runs and resets $flag_1$ to 0.
Again, this is not possible because of the requirement that [there is no \block\ command before $s_1$] OR [there is no task other than $t_1$ which can reset $flag_1$ to 0].
\item Both $t_1$ and $t_2$ run at the same priority. Before $t_1$ sets $flag_1$ to 1, $t_2$ checks $flag_1$ and finds that it is 0, and enters the block containing $s_2$. Before $t_2$ executes $s_2$, it's time slice expires. It gets preempted by $t_1$ which sets $flag_1$ to 1 and starts $s_1$. However this is ruled out by the requirement that $p_2$ $>$  $p_1$.
\end{itemize}

This completes the argument.
\end{proof}

\begin{figure}
\vspace{-0.6cm}
\begin{picture}(0,0)%
\includegraphics{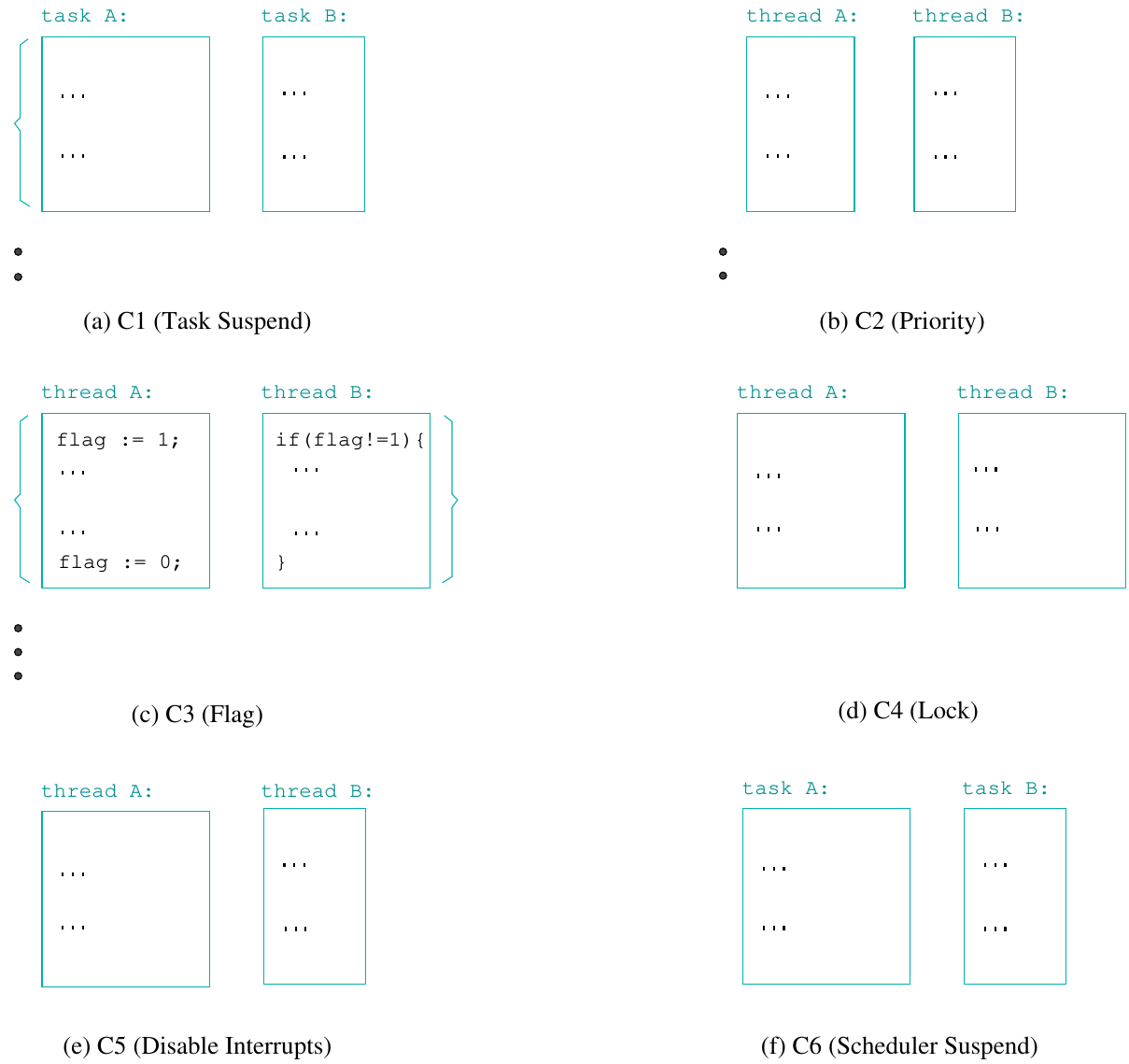}%
\end{picture}%
\setlength{\unitlength}{2693sp}%
\begingroup\makeatletter\ifx\SetFigFont\undefined%
\gdef\SetFigFont#1#2#3#4#5{%
  \reset@font\fontsize{#1}{#2pt}%
  \fontfamily{#3}\fontseries{#4}\fontshape{#5}%
  \selectfont}%
\fi\endgroup%
\begin{picture}(8494,8008)(1034,-7840)
\put(6776,-826){\makebox(0,0)[lb]{\smash{{\SetFigFont{8}{9.6}{\familydefault}{\mddefault}{\updefault}{\color[rgb]{0,0,0}$s_1$;}%
}}}}
\put(7651,-818){\makebox(0,0)[b]{\smash{{\SetFigFont{8}{9.6}{\familydefault}{\mddefault}{\updefault}{\color[rgb]{0,0,0}$\noibsymb$}%
}}}}
\put(8776,-830){\makebox(0,0)[lb]{\smash{{\SetFigFont{8}{9.6}{\sfdefault}{\mddefault}{\updefault}{\color[rgb]{0,.56,.56}$(p_2,q_2)$}%
}}}}
\put(6616,-1771){\makebox(0,0)[lb]{\smash{{\SetFigFont{8}{9.6}{\rmdefault}{\mddefault}{\itdefault}{\color[rgb]{0,0,0}$p_1 > q_2$.}%
}}}}
\put(6526,-830){\makebox(0,0)[rb]{\smash{{\SetFigFont{8}{9.6}{\sfdefault}{\mddefault}{\updefault}{\color[rgb]{0,.56,.56}$(p_1,q_1)$}%
}}}}
\put(8056,-826){\makebox(0,0)[lb]{\smash{{\SetFigFont{8}{9.6}{\familydefault}{\mddefault}{\updefault}{\color[rgb]{0,0,0}$s_2$;}%
}}}}
\put(6616,-1951){\makebox(0,0)[lb]{\smash{{\SetFigFont{8}{9.6}{\rmdefault}{\mddefault}{\itdefault}{\color[rgb]{0,0,0}No task with max prio $\geq p_1$ suspends $A$.}%
}}}}
\put(1049,-830){\makebox(0,0)[rb]{\smash{{\SetFigFont{8}{9.6}{\sfdefault}{\mddefault}{\updefault}{\color[rgb]{0,.56,.56}$(p_1,q_1)$}%
}}}}
\put(1306,-1771){\makebox(0,0)[lb]{\smash{{\SetFigFont{8}{9.6}{\rmdefault}{\mddefault}{\itdefault}{\color[rgb]{0,0,0}No task with max prio $\geq p_1$ resumes $B$.}%
}}}}
\put(1306,-4606){\makebox(0,0)[lb]{\smash{{\SetFigFont{8}{9.6}{\rmdefault}{\mddefault}{\itdefault}{\color[rgb]{0,0,0}Sched suspended or no task with prio $\geq p_1$ resets flag.}%
}}}}
\put(1049,-3665){\makebox(0,0)[rb]{\smash{{\SetFigFont{8}{9.6}{\sfdefault}{\mddefault}{\updefault}{\color[rgb]{0,.56,.56}$(p_1,q_1)$}%
}}}}
\put(1306,-4786){\makebox(0,0)[lb]{\smash{{\SetFigFont{8}{9.6}{\rmdefault}{\mddefault}{\itdefault}{\color[rgb]{0,0,0}No block before $s_1$ or no other task resets flag.}%
}}}}
\put(4546,-3661){\makebox(0,0)[lb]{\smash{{\SetFigFont{8}{9.6}{\sfdefault}{\mddefault}{\updefault}{\color[rgb]{0,.56,.56}$(p_2,q_2)$}%
}}}}
\put(1466,-3661){\makebox(0,0)[lb]{\smash{{\SetFigFont{8}{9.6}{\familydefault}{\mddefault}{\updefault}{\color[rgb]{0,0,0}$s_1$;}%
}}}}
\put(2836,-3653){\makebox(0,0)[b]{\smash{{\SetFigFont{8}{9.6}{\familydefault}{\mddefault}{\updefault}{\color[rgb]{0,0,0}$\noibsymb$}%
}}}}
\put(1306,-4966){\makebox(0,0)[lb]{\smash{{\SetFigFont{8}{9.6}{\rmdefault}{\mddefault}{\itdefault}{\color[rgb]{0,0,0}$q_1 < p_2$}%
}}}}
\put(3241,-3661){\makebox(0,0)[lb]{\smash{{\SetFigFont{8}{9.6}{\familydefault}{\mddefault}{\updefault}{\color[rgb]{0,0,0}$s_2$;}%
}}}}
\put(6711,-4111){\makebox(0,0)[lb]{\smash{{\SetFigFont{7}{8.4}{\ttdefault}{\mddefault}{\updefault}{\color[rgb]{0,0,0}\unlock(l);}%
}}}}
\put(8076,-3652){\makebox(0,0)[b]{\smash{{\SetFigFont{8}{9.6}{\familydefault}{\mddefault}{\updefault}{\color[rgb]{0,0,0}$\noibsymb$}%
}}}}
\put(8346,-3185){\makebox(0,0)[lb]{\smash{{\SetFigFont{7}{8.4}{\ttdefault}{\mddefault}{\updefault}{\color[rgb]{0,0,0}\lock(l);}%
}}}}
\put(6696,-3185){\makebox(0,0)[lb]{\smash{{\SetFigFont{7}{8.4}{\ttdefault}{\mddefault}{\updefault}{\color[rgb]{0,0,0}\lock(l);}%
}}}}
\put(8346,-4110){\makebox(0,0)[lb]{\smash{{\SetFigFont{7}{8.4}{\ttdefault}{\mddefault}{\updefault}{\color[rgb]{0,0,0}\unlock(l);}%
}}}}
\put(8356,-3660){\makebox(0,0)[lb]{\smash{{\SetFigFont{8}{9.6}{\familydefault}{\mddefault}{\updefault}{\color[rgb]{0,0,0}$s_2$;}%
}}}}
\put(6726,-3660){\makebox(0,0)[lb]{\smash{{\SetFigFont{8}{9.6}{\familydefault}{\mddefault}{\updefault}{\color[rgb]{0,0,0}$s_1$;}%
}}}}
\put(6752,-7091){\makebox(0,0)[lb]{\smash{{\SetFigFont{7}{8.4}{\ttdefault}{\mddefault}{\updefault}{\color[rgb]{0,0,0}\resumesched}%
}}}}
\put(6737,-6165){\makebox(0,0)[lb]{\smash{{\SetFigFont{7}{8.4}{\ttdefault}{\mddefault}{\updefault}{\color[rgb]{0,0,0}\suspendsched;}%
}}}}
\put(6747,-6640){\makebox(0,0)[lb]{\smash{{\SetFigFont{8}{9.6}{\familydefault}{\mddefault}{\updefault}{\color[rgb]{0,0,0}$s_1$;}%
}}}}
\put(8117,-6632){\makebox(0,0)[b]{\smash{{\SetFigFont{8}{9.6}{\familydefault}{\mddefault}{\updefault}{\color[rgb]{0,0,0}$\noibsymb$}%
}}}}
\put(8387,-6640){\makebox(0,0)[lb]{\smash{{\SetFigFont{8}{9.6}{\familydefault}{\mddefault}{\updefault}{\color[rgb]{0,0,0}$s_2$;}%
}}}}
\put(1471,-7112){\makebox(0,0)[lb]{\smash{{\SetFigFont{7}{8.4}{\ttdefault}{\mddefault}{\updefault}{\color[rgb]{0,0,0}\enable;}%
}}}}
\put(2836,-6653){\makebox(0,0)[b]{\smash{{\SetFigFont{8}{9.6}{\familydefault}{\mddefault}{\updefault}{\color[rgb]{0,0,0}$\noibsymb$}%
}}}}
\put(1486,-6661){\makebox(0,0)[lb]{\smash{{\SetFigFont{8}{9.6}{\familydefault}{\mddefault}{\updefault}{\color[rgb]{0,0,0}$s_1$;}%
}}}}
\put(3147,-6640){\makebox(0,0)[lb]{\smash{{\SetFigFont{8}{9.6}{\familydefault}{\mddefault}{\updefault}{\color[rgb]{0,0,0}$s_2$;}%
}}}}
\put(1456,-6186){\makebox(0,0)[lb]{\smash{{\SetFigFont{7}{8.4}{\ttdefault}{\mddefault}{\updefault}{\color[rgb]{0,0,0}\disable;}%
}}}}
\put(1471,-1277){\makebox(0,0)[lb]{\smash{{\SetFigFont{7}{8.4}{\ttdefault}{\mddefault}{\updefault}{\color[rgb]{0,0,0}\resume(B);}%
}}}}
\put(1456,-351){\makebox(0,0)[lb]{\smash{{\SetFigFont{7}{8.4}{\ttdefault}{\mddefault}{\updefault}{\color[rgb]{0,0,0}\suspend$(B)$;}%
}}}}
\put(1466,-826){\makebox(0,0)[lb]{\smash{{\SetFigFont{8}{9.6}{\familydefault}{\mddefault}{\updefault}{\color[rgb]{0,0,0}$s_1$;}%
}}}}
\put(1306,-1956){\makebox(0,0)[lb]{\smash{{\SetFigFont{8}{9.6}{\rmdefault}{\mddefault}{\itdefault}{\color[rgb]{0,0,0}No block before $s_1$ or no other task resumes $B$.}%
}}}}
\put(2836,-818){\makebox(0,0)[b]{\smash{{\SetFigFont{8}{9.6}{\familydefault}{\mddefault}{\updefault}{\color[rgb]{0,0,0}$\noibsymb$}%
}}}}
\put(3140,-826){\makebox(0,0)[lb]{\smash{{\SetFigFont{8}{9.6}{\familydefault}{\mddefault}{\updefault}{\color[rgb]{0,0,0}$s_2$;}%
}}}}
\end{picture}%

\caption{Rules that guarantee $s_2$ cannot occur in between $s_1$
  (i.e.\@ $s_1 \noibsymb s_2$)}
\label{fig:occur-in-between}
\end{figure}




\section{Implementation and Evaluation}
\label{sec:implementation}



We have implemented our analysis for FreeRTOS \cite{freertos}
applications in a tool called \tool\ (for ``RTOS App Racer'').
Our IDA language is closely modelled on the syntax and semantics of
FreeRTOS apps, and hence we continue to use the IDA commands in place of
the FreeRTOS commands in this section.
Our analysis is implemented in the CIL static analysis framework
\cite{cil} using OCaml, and comprises a few 
pre-analyses followed by the main analysis for checking the
conditions.
For convenience we assume that each \create\ statement uses a different
thread identifier.

\emph{Priority Analysis.} The priority analysis determines the min
and max dynamic priority at each statement in each thread
function. This is done in 2 passes as follows.
We first perform an interval-based analysis for each function $A$,
maintaining an interval $[p,q]$ of possible priority values at each
statement.
The initial interval is $[p_0,q_0]$, given by the min and max
priorities among threads that run $A$.
The transfer function for a \cmd{set\_priority}$(\mathtt{NULL},p)$ statement
returns the interval $[p,p]$, and is identity for
other statements.
The join is the standard join on the interval lattice.
In the second pass, for each statement \cmd{set\_priority}$(t,p')$
where $t$ may run $A$, we update the interval $[p,q]$ at each
statement in $A$ to $[\min(p,p'),\max(q,p')]$.


\emph{Suspend/Resume Analysis.} This analysis determines the set of
tasks which can suspend or resume a given task. We maintain a set of
task functions $\susplist(A)$ and $\reslist(A)$ for each task
function $A$, containing the set of tasks that can suspend and resume
$A$ respectively. For each task $B$ with a
\suspend$(A)$ or \resume$(A)$ statement, we add $B$ to
$\susplist(A)$ or $\reslist(A)$ respectively.
    
\emph{Lockset Analysis.} A standard lockset analysis aims to compute
the set of locks that are \emph{must} held at each program point.
On a \lock$(l)$ statement, the transfer function adds $l$ to the set
of locks held after this statement, while
for an \unlock($l$) statement we remove $l$ from the set of locks held.
The join operation is simply the intersection of the locksets at the
incoming points.
In our setting, apart from the standard locks, we use notional locks
that correspond to the different kinds of code blocks used in the
rules of Fig.~\ref{fig:occur-in-between}.
Thus, for each \suspend$(A)$-\resume$(A)$ block (rule~C1) we use a
notional lock $S_A$ that we ``acquire'' at the \suspend$(A)$ statement
and ``release'' at the \resume$(A)$ statement.
The lockset analysis would then let us identify a
\suspend$(A)$-\resume$(A)$ block by the fact that the lock $S_A$ is
held throughout these statements.
In a similar way we use locks $F_{\mathit{flag}}^{\mathit{set}}$ and
$F_{\mathit{flag}}^{\mathit{chk}}$ for each flag variable
$\mathit{flag}$, corresponding to the two blocks in rule~C3; lock
$D$ for \disable-\enable\ (rule~C5); and lock $S$ for
\suspendsched-\resumesched\ (rule~C6).

\begin{figure}
    \centering
     \includegraphics[scale=0.4]{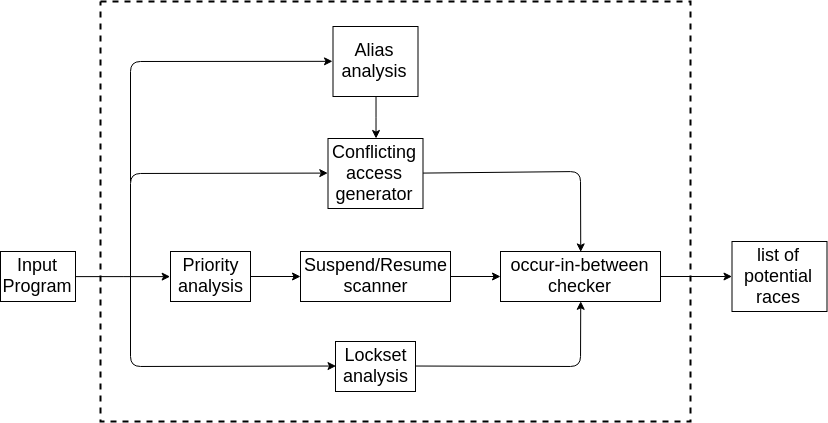}
    \caption{Architecture of \tool}
    \label{fig:90}
\end{figure}

\emph{Main Analysis.}
The overall analysis computes conflicting accesses by scanning for global
variables having shared accesses in different
threads with at least one access being a write access.
We use CIL's inbuilt alias analysis to resolve pointers to shared global
variables.
The information
obtained from priority and lockset analysis is used to check for the
cannot occur-in-between relation between the pair of conflicting
accesses, using rules C1--C6. If
both accesses in the pair cannot occur-in-between each other,
they are declared to be non-racy; else they are declared to be potentially
racy.
A schematic of our implementation is shown in Fig~\ref{fig:90}.

Finally, the analysis allows a couple of command-line switches to
handle some of the configuration options of FreeRTOS.
Certain locks (called ``mutex'' locks) have a priority inheritance
mechanism associated with them: when a higher priority thread is 
waiting on a mutex already acquired by a lower priority thread, the
lower priority thread has its priority bumped up to the priority of
the higher priority thread.
Anticipating a need while translating nxtOSEK applications, we also
allow a ceiling priority mechanism for mutexes which immediately increases
the priority of the acquiring thread to the max priority of all threads
that might ever acquire the mutex.
To handle this we adapt the transfer function of our priority analysis
for a \lock$(l)$ statement, when $l$ is a mutex, to return
$[p,\max(q,p')]$ in the case of priority inheritance, and
$[p',p']$ in the case of ceiling priority, where $[p,q]$ is the
incoming priority interval and $p'$ is the max priority of any task
that might acquire $l$.
We also
provide a switch to disallow round-robin scheduling within threads of
the same priority, and handle it by modifying the cannot
occur-in-between conditions for C1 and C2 by replacing ``$>$'' by ``$\geq$''
for the dynamic priorities.

\emph{Experimental Evaluation.}
We tested our analysis on 12 FreeRTOS applications, shown in
Tab.~\ref{tab:exps}.
The first 9 are nxtOSEK test programs \cite{nxtosek} analysed in
\cite{SchwarzSVLM11}, which were converted to FreeRTOS programs
taking care to preserve the nxtOSEK semantics which these programs
use.
nxtOSEK uses a priority
ceiling protocol for mutex locks and no round-robin
scheduling between same priority tasks.
The next 2 are demo applications in FreeRTOS and
finally, rangefinder.c is the converted version of
an ArduPilot subsystem \cite{ardupilot} originally in ChibiOS/C++.

The examples used by Schwarz et al. \cite{SchwarzSVLM11}
consist of bipedrobot.c which is part of the control software of a
self-balancing robot, pe\_test.c which tests preemptive scheduling,
res\_test.c which tests resource based synchronization, tt\_test.c
where tasks are time-triggered, usb\_test.c which tests usb
communication, pingpong.c where two tasks set a variable to ``ping''
and ``pong'' using a mutex and counter.c where one task increments the
fields of a structure and the other task resets and prints these
fields. The programs example.c and example\_fun.c are examples from
\cite{SchwarzSVLM11}.
The FreeRTOS demo dynamic.c consists of three tasks which use
different mechanisms to access a shared global counter. IntQueue.c is
another FreeRTOS demo where tasks share global arrays and
counters. 
Finally rangefinder.c is an ArduPilot subsystem with three task threads and
one ISR thread which share access to a state variable and ring and
bounce buffers.

Table~\ref{tab:exps} shows the results of our analysis on these programs.
We ran these experiments on a Intel Core i5-8250U 1.60GHz Quad CPU
machine under Ubuntu 18.04.4. Conf. acc. denotes the total
number of pairs of conflicting accesses to shared global variables in
the program. True races denotes the number of actual races existing
in the program. \tool\ ``Pot.\@ Races'' denotes the number of conflicting
accesses flagged to be potentially racy by the analysis. \%Elim.\ denotes the fraction of conflicting accesses declared to be
non-racy and \%Prec.\ denotes the fraction of potential races
which are actual races. Pot.\@ Races from \cite{SchwarzSVLM11} and
\cite{ChopraPD19} denote the number of potential races flagged using 
their respective techniques.

\begin{table}
  \centering
  \begin{small}
    \begin{tabular}{|l|r|r|r|r|r|r|r|r|r|}
\hline

\multirow{3}{4em}{Program}& \multirow{3}{2em}{LoC} & \multirow{3}{2em}{Conf. acc.} & \multirow{3}{2em}{True Races}  & \multicolumn{4}{c|}{\tool} & Pot. & Pot. \\\cline{5-8} 
&&&& Time & Pot. & \% & \% & Races & Races \\ 
&&&& (in s)    & Races    &Elim.   & Prec.   & \cite{SchwarzSVLM11} &  \cite{ChopraPD19} \\ \hline
 bipedrobot.c   & 338 & 3     & 0     & 1.39       & 2        & 33.33     & 0.00      & 2 & 10\\
 pe\_test.c     & 95  & 4     & 3     & 0.01       & 3        & 25.00     & 100.00    & 3 & 7 \\
 res\_test.c    & 110 & 40    & 8     & 0.03       & 9        & 77.50     & 88.88     & 9 & 11\\
 tt\_test.c     & 105 & 5     & 3     & 0.01       & 3        & 40.00     & 100.00    & 3 & 6\\
 usb\_test.c    & 169 & 0     & 0     & 0.02       & 0        & 0.00      & 100.00    & 0 & 52\\
 example.c      & 87  & 13    & 1     & 0.03       & 1        & 92.30     & 100.00    & 1 & 61\\
 example\_fun.c & 102 & 13    & 1     & 0.05       & 4        & 69.23     & 25.00     & 1 & 61\\
 pingpong.c     & 112 & 3     & 0     & 0.01       & 0        & 100.00    & 100.00    & 0 & 7\\
 counter.c      & 96  & 6     & 6     & 0.01       & 6        & 0.00      & 100.00    & 6 & 9\\ 
 dynamic.c      & 429 & 20    & 2     & 0.13       & 6        & 70.00     & 33.33     & 16* & 23\\
 IntQueue.c     & 747 & 42    & 5     & 0.97       & 16       & 61.90     & 31.25     & 10* & 24\\
rangefinder.c   & 394 & 16    & 10    & 0.23       & 10       & 37.50     & 100.00    & 16* & 18\\ \hline
\end{tabular}
  \end{small}
    \caption{Experimental results}
    \label{tab:exps}
\end{table}
  
 In bipedrobot.c, the Task\_Init only runs once and hence the potential races are false positives. The decrement to digits in LowTask races with the read and write access in HighTask in pe\_test.c. In res\_test.c, the read accesses to digits are unprotected due to which it can be an actual race. The decrement of digits in LowTask is unprotected from HighTask and hence it is racy in tt\_test.c. In usb\_test.c, there are no shared accesses between tasks and hence it is trivially race-free. The races in example.c and example\_fun.c are shown in \cite{SchwarzSVLM11}. The ping and pong tasks use a mutex to access the shared variable in pingpong.c and it is race-free. In counter.c, the fields of the global structure are accessed without any protection and hence race with each other. The first initialization of the counter by the controller task in dynamic.c is an actual race with the increment in the continuous increment task because both are created at the same priority and the continuous increment task can preempt the controller when it is initializing the counter. In Intqueue.c some accesses to the shared arrays are real races. In rangefinder.c, the I2C bus thread's access to the state variable is not protected from the main thread and the main thread's access to the ring buffer is not protected from the UART thread which results in a high number of actual races.
 
 The potential races from \cite{SchwarzSVLM11} value is obtained by
 manually estimating the working of the idea in
 \cite{SchwarzSVLM11}. This is marked with a * for the last 3 programs
 as their technique does not handle constructs like dynamically
 changing the priority of a task and hence is potentially unsound for
 these programs. It also results in more false positives for the
 dynamic.c and IntQueue.c examples as protection from synchronization
 mechanisms like suspending another task, disabling interrupts and
 suspending the scheduler is not considered. The number of potential races
 from \cite{ChopraPD19} is obtained using their tool. The tool does
 not consider priorities for synchronization. Moreover it considers each 
 task function to be run by multiple threads even if only one thread runs it
 in the application. These factors add to its imprecision.

In dynamic.c the conflicts in the continuous increment task and the
limited increment task seem to be racy because they occur at the same
priority but the controller task actually ensures that these two can
never be in the ready state at the same time keeping one of these two
suspended at all times. But this is unknown to the analysis when it
encounters the conflicts as this dynamic information about the
controller task cannot be made available at these points. This is the
reason behind the false positives.
The analysis and the test programs with the
results can be found in the repository
\url{bitbucket.org/rishi2289/static\_race\_detect/}.
 

\section{Related Work}
\label{sec:related}
We discuss related work grouped according to the three categories below.

\emph{Static Race Detection.}
The most closely related work is that of Schwarz \etal\
\cite{schwarz2014,SchwarzSVLM11} and Chopra \etal\ \cite{ChopraPD19}.
In \cite{schwarz2014,SchwarzSVLM11} Schwarz \etal\ provide a precise
interprocedural data-flow analysis for checking races in OSEK-like applications
that use the priority ceiling semantics.
Chopra \etal\ \cite{ChopraPD19} propose the notion of disjoint-blocks to detect
data races and carry out data-flow analysis for FreeRTOS-like interrupt-driven
\emph{kernel APIs}. 
In contrast to both these works, our work handles a comprehensive
variety of synchronization mechanisms, including suspend-resume
and setting priorities dynamically. In addition we handle dynamic
thread creation which both these works do not.

In other work in this category Chen \etal\ \cite{Chen2011} develop a
static analysis tool for race detection 
in binaries of interrupt-driven programs with interrupt 
priorities of upto two levels. The tool considers only disable-enable
of interrupts as a synchronization mechanism and does not consider 
interleavings of task threads.
%
Regehr and Cooprider \cite{RegehrC07} describe a source-to-source translation
of an interrupt-driven program to a standard multi-threaded program, and analyze
the translated program for data races. However their translation is inadequate
in our setting and we refer the reader to \cite{ChopraPD19} for the inherent 
problems with such an approach.
Sung \etal\ \cite{SungKW17} propose a modular technique for static verification
of interrupt-driven programs with nesting and priorities.
However, the algorithm does not consider interrupt-related
synchronization mechanisms nor does it consider interleavings of
task threads or interaction with the ISRs.
Wang \etal\ \cite{Wang2017} present SDRacer, an automated framework that detects
and validates race conditions in interrupt-driven embedded 
software. The tool combines static analysis, symbolic execution, and dynamic
simulation.
However, it is unsound as their static analysis does not 
iterate to fixpoint. 
Mine \etal\ \cite{mine2016} extend Astree by employing a thread-modular static
analyzer to soundly report 
data races in embedded C programs with mutex locks and dynamic priorities.
However they do not consider interrupts and synchronization mechanisms
like flag-based and suspend-resume.
Finally, several papers do lockset-based static analysis for data
races in classical concurrent programs
\cite{Sterling93,Engler2003a,VoungJL07,AbadiFF06}. Flanagan \etal\ \cite{FlanaganF00,FlanaganF01} uses type system to track the lockset at each
program point.
However none of these techniques
apply to interrupt-driven programs with non-standard synchronization
mechanisms and switching semantics.



\emph{Model-Checking.}
Several researchers have used model-checking tools like Slam, Blast, and Spin
to precisely model various kinds of synchronization mechanisms
and detect errors exhaustively
\cite{HenzingerJM04,ElmasQT2008,HavelundS99,HavelundLP01,ZengSLH12,AlurMP00,MukherjeeKD17}.
These technique cannot handle dynamic thread creation, and even with a
small bound on the number of threads suffer from state-space explosion.
Liang \etal\ \cite{Liang2018} present an effective method to verify
interrupt-driven software with nested interrupts, based on symbolic execution. 
The method translates a concurrent program into atomic memory read/write 
\emph{events}, and then describe the interleavings of these events as a 
symbolic partial order expressed by a SAT/SMT formula. It is able to verify
only a bounded number of interrupts.

\emph{High-Level Race Detection.}
A ``high-level'' race occurs when two blocks of code representing
critical accesses overlap in an execution.
Our definition of a data race between statements $s_1$ and $s_2$ in
program $P$ can thus be phrased as a
high-level race on the \myskip-blocks in the augmented program $P_{s_1,s_2}$.
Artho \etal\ \cite{Artho03high}, %
von Praun and Gross \cite{PraunG04}, and Pessanha
\etal\ \cite{DiasPL12}
study a ``view''-based notion of high-level races and carry out
lockset based static analysis to detect high-level races.
Singh \etal\ \cite{SinghPDD19} use the disjoint-block notion of
\cite{ChopraPD19}
to detect high-level races in several RTOS \emph{kernels}. They
consider some non-standard 
synchronization mechanisms and also the relative scheduling priorities of 
specialized threads like callbacks and software interrupts.
However none of these techniques handle the full gamut of synchronization
mechanisms we address, and hence would be very imprecise for our
applications.

\section{Conclusions and Future Work}
\label{sec:conclusion}

We have presented an efficient and precise way to detect
data-races in RTOS applications that use a variety of non-standard
synchronization constructs and idioms.
Going forward we would like to extend our tool to be able to handle
large real-life applications like ArduPilot which are written in C++
and run on the ChibiOS RTOS.
We would also like to extend our technique to identify disjoint-block patterns
so that we can carry out efficient data-flow analysis
\cite{ChopraPD19} for such applications.

\bibliography{references}
\appendix
\section{Semantics}\label{app:sem}
\scalebox{0.9}{
\[
 \begin{prooftree}
   \Hypo{c=\myskip \indent t=r \indent \pc(t)=l}
   
   \Infer1[\tiny SKIP]{s \trel_{\instr} \langle \blocked, \suspended, \ready, \pmap, \acquired, \fmap, \pc[t \mapsto l'], \env, r, i, \sd, \id \rangle}
 \end{prooftree}
\]
}
\\[0.5cm]
\scalebox{0.9}{
\[
 \begin{prooftree}
   \Hypo{c=\myskip \indent t \in \ready \indent \isr{t} \indent t \neq r \indent \pc(t)=l=\ent_{\fmap(t)} \indent \pmap(t) > \pmap(r) \indent \id = \false}
   
   \Infer1[\tiny SKIP-INT]{s \trel_{\instr} \langle \blocked, \suspended, \ready, \pmap, \acquired, \fmap, \pc[t \mapsto l'], \env, t, r, \sd, \id \rangle}
 \end{prooftree}
\]
}
\\[0.5cm]
\scalebox{0.9}{
\[
 \begin{prooftree}
  \Hypo{c=\var{x}:=\var{e} \indent t=r \indent \pc(t)=l}
  
  \Infer1[\tiny ASSIGN]{s \trel_{\instr} \langle \blocked, \suspended, \ready, \pmap, \acquired, \fmap, \pc[t \mapsto l'], \env[x \mapsto \llbracket e \rrbracket_{\env} ], r, i, \sd, \id \rangle}
 \end{prooftree}
\]
}
\\[0.5cm]
\scalebox{0.9}{
\[
 \begin{prooftree}
  \Hypo{c=\var{x}:=\var{e} \indent t \in \ready \indent \isr{t} \indent t \neq r \indent \pc(t)=l=\ent_{\fmap(t)} \indent \pmap(t) > \pmap(r) \indent \id = \false}
  
  \Infer1[\tiny ASSIGN-INT]{s \trel_{\instr} \langle \blocked, \suspended, \ready, \pmap, \acquired, \fmap, \pc[t \mapsto l'], \env[x \mapsto \llbracket e \rrbracket_{\env} ], t, r, \sd, \id \rangle}
 \end{prooftree}
\]
}
\\[0.5cm]
\scalebox{0.9}{
\[
 \begin{prooftree}
  \Hypo{c=\assume(\var{b}) \indent t=r \indent \pc(t)=l \indent \llbracket b \rrbracket_{\env}=\true }
  
  \Infer1[\tiny ASSUME]{s \trel_{\instr} \langle \blocked, \suspended, \ready, \pmap, \acquired, \fmap, \pc[t \mapsto l'], \env, r, i, \sd, \id \rangle}
 \end{prooftree}
\]
}
\\[0.5cm]
\scalebox{0.9}{
\[
 \begin{prooftree}
  \Hypo{c=\assume(\var{b}) \smspace t \in \ready \smspace \isr{t} \smspace t \neq r \smspace \pc(t)=l=\ent_{\fmap(t)} \smspace \pmap(t) > \pmap(r) \smspace \llbracket b \rrbracket_{\env}=\true  \smspace \id = \false}
  
  \Infer1[\tiny ASSUME-INT]{s \trel_{\instr} \langle \blocked, \suspended, \ready, \pmap, \acquired, \fmap, \pc[t \mapsto l'], \env, t, r, \sd, \id \rangle}
 \end{prooftree}
\]
}
\\[0.5cm]
%

%

\scalebox{0.9}{
\[
 \begin{prooftree}
  \Hypo{ { c=\create(\var{A}, \var{p}, \var{v}) \smspace t=r \smspace
      \task{t} \smspace A \in \codes \smspace \type(A) = \taskfn \smspace ts \notin \threads \smspace (p \leq \pmap(r)  \vee  (\sd \vee \id) = \true) }} 
  
  \Infer1[\tiny CREATE-NS]{s \trel_{\instr} \langle \blocked,
    \suspended, \ready \union \{ts\}, \pmap[ts \mapsto p], \acquired,
    \fmap[ts \mapsto A], \pc[t \mapsto l', ts \mapsto \ent_{A}], \env[v \mapsto ts], r, i, \sd, \id \rangle}
 \end{prooftree}
\]
}
\\[0.5cm]
%

\scalebox{0.9}{
\[
 \begin{prooftree}
  \Hypo{{ c=\create(\var{A}, \var{p}, \var{v}) \smspace t=r \smspace \task{t} \smspace A \in \codes \smspace \type(A) = \taskfn \smspace  ts \notin \threads \smspace p > \pmap(r) \smspace (\sd \vee \id)=\false  }}
  
  \Infer1[\tiny CREATE-CS]{s \trel_{\instr} \langle \blocked, \suspended, \ready \union \{ts\}, \pmap[ts \mapsto p], \acquired, \fmap[ts \mapsto A], \pc[t \mapsto l', ts \mapsto \ent_{A}], \env[\var{v} \mapsto ts], ts, i, \sd, \id \rangle}
 \end{prooftree}
\]
}
\\[0.5cm]
\scalebox{0.8}{
\[
\hspace*{-0.5cm}
 \begin{prooftree}
  \Hypo{ { c=\cmd{set\_priority}(\var{ts}, \var{p}) \smspace t=r \smspace \task{t} \smspace \pc(t)=l \smspace p \in \nat \smspace \task{ts} \smspace ts \in \threads \smspace (\ (\pmap(r) \geq p ) \ \vee \ (\pmap(r) < p \ \wedge \ ts \in (\blocked \union \suspended)) \ \vee \ (\sd \vee \id)=\true \ ) } }
  
  \Infer1[\tiny SETP-NS]{s \trel_{\instr} \langle \blocked, \suspended, \ready, \pmap[ts \mapsto p], \acquired, \fmap, \pc[t \mapsto     l'], \env, r, i, \sd, \id \rangle}
 \end{prooftree}
\]
}
\\[0.5cm]
\scalebox{0.9}{
\[
 \begin{prooftree}
  \Hypo{ { c=\cmd{set\_priority}(\var{ts}, \var{p}) \smspace t=r \smspace \task{t} \smspace \pc(t)=l \smspace p \in \nat \smspace \task{ts} \smspace ts \in \ready \smspace p > \pmap(r) \smspace (\sd  \vee  \id)=\false } }
  
  \Infer1[\tiny SETP-CS]{s \trel_{\instr} \langle \blocked, \suspended, \ready, \pmap[ts \mapsto p], \acquired, \fmap, \pc[t \mapsto     l'], \env, ts, i, \sd, \id \rangle}
 \end{prooftree}
\]
}
\\[0.5cm]
\scalebox{0.9}{
\[
 \begin{prooftree}
  \Hypo{c=\suspend(\var{ts}) \indent \task{t} \indent t = r \neq ts \indent ts \in \threads \indent \pc(t)=l }
  
  \Infer1[\tiny SUS-NS]{s \trel_{\instr} \langle \blocked - \{ts\}, \suspended \union \{ts\}, \ready - \{ts\}, \pmap, \acquired, \fmap, \pc[t \mapsto l'], \env, r, i, \sd, \id \rangle}
 \end{prooftree}
\]
}
\\[0.5cm]
\scalebox{0.8}{
\[
\hspace*{-1cm}
 \begin{prooftree}
  \Hypo{ { c=\suspend(\var{ts}) \smspace \task{t} \smspace t=r=ts \smspace \pc(t)=l \smspace (\sd \vee \id)=\false \smspace \exists ts' \in \ready. \task{ts'} \ \wedge \ ts' \neq r \ \wedge \ \pmap(ts') = max(\{\pmap(u) | u \in \ready -\{r\} \ \wedge\ \task{u} \})}}
  
  \Infer1[\tiny SUS-CS]{s \trel_{\instr} \langle \blocked, \suspended \union \{r\}, \ready-\{r\}, \pmap, \acquired, \fmap, \pc[t \mapsto l'], \env, ts', i, \sd, \id \rangle}
 \end{prooftree}
\]
}
\\[0.5cm]
\scalebox{0.9}{
\[
 \begin{prooftree}
  \Hypo{c=\resume(\var{ts}) \smspace \task{t} \smspace t = r \neq ts \smspace \pc(t)=l \smspace ts \in (\suspended \union \ready) \smspace ( (\sd \ \vee \ \id)=\true \ \vee \ \pmap(r) \geq \pmap(ts)  ) }
  
  \Infer1[\tiny RES-NS]{s \trel_{\instr} \langle \blocked, \suspended - \{ts\}, \ready \union \{ts\}, \pmap, \acquired, \fmap, \pc[t \mapsto l'], \env, r, i, \sd, \id \rangle}
 \end{prooftree}
\]
}
\\[0.5cm]
\scalebox{0.9}{
\[
 \begin{prooftree}
  \Hypo{c=\resume(\var{ts}) \smspace \task{t} \smspace t = r \neq ts \smspace \pc(t)=l \smspace ts \in \suspended \smspace (\sd \ \vee \ \id)=\false \smspace \pmap(ts) > \pmap(r) }
  
  \Infer1[\tiny RES-CS]{s \trel_{\instr} \langle \blocked, \suspended - \{ts\}, \ready \union \{ts\}, \pmap, \acquired, \fmap, \pc[t \mapsto l'], \env, ts, i, \sd, \id \rangle}
 \end{prooftree}
\]
}
\\[0.5cm]
\scalebox{0.9}{
\[
 \begin{prooftree}
  \Hypo{c=\suspendsched \indent t=r \indent \task{t}\indent \pc(t)=l } 
  
  \Infer1[\tiny SUSSCH]{s \trel_{\instr} \langle \blocked, \suspended, \ready, \pmap, \acquired, \fmap, \pc[t \mapsto l'], \env, r, i, \true, \id \rangle}
 \end{prooftree}
\]
}
\\[0.5cm]
\scalebox{0.9}{
\[
 \begin{prooftree}
  \Hypo{c=\resumesched \indent t=r \indent \task{t}\indent \pc(t)=l \indent (\forall ts \in \ready. \task{ts} \wedge \pmap(r) \geq \pmap(ts) \ \vee\ \id = \true)}
  
  \Infer1[\tiny RESSCH-NS]{s \trel_{\instr} \langle \blocked, \suspended, \ready, \pmap, \acquired, \fmap, \pc[t \mapsto l'], \env, r, i, \false, \id \rangle}
 \end{prooftree}
\]
}
\\[0.5cm]
\scalebox{0.9}{
\[
 \begin{prooftree}
  \Hypo{{ c=\resumesched \smspace t=r \smspace \task{t} \smspace \pc(t)=l \smspace \exists ts \in \ready. \task{ts} \wedge \pmap(ts) = max(\{\pmap(u) | u \in \ready \ \wedge \ \task{u} \}) \smspace \id = \false }}
  
  \Infer1[\tiny RESSCH-CS]{s \trel_{\instr} \langle \blocked, \suspended, \ready, \pmap, \acquired, \fmap, \pc[t \mapsto l'], \env, ts, i, \false, \id \rangle}
 \end{prooftree}
\]
}
\\[0.5cm]
\scalebox{0.9}{
\[
 \begin{prooftree}
  \Hypo{c=\disable \indent t=r \indent \pc(t)=l }
  
  \Infer1[\tiny DISINT]{s \trel_{\instr} \langle \blocked, \suspended, \ready, \pmap, \acquired, \fmap, \pc[t \mapsto l'], \env, r, i, \sd, \true \rangle}
 \end{prooftree}
\]
}
\\[0.5cm]
\scalebox{0.9}{
\[
 \begin{prooftree}
  \Hypo{c=\disable \indent t \in \ready \indent \isr{t} \indent t \neq r \indent \pc(t)=l=\ent_{\fmap(t)} \indent \pmap(t) > \pmap(r) \indent \id = \false }
  
  \Infer1[\tiny DISINT-INT]{s \trel_{\instr} \langle \blocked, \suspended, \ready, \pmap, \acquired, \fmap, \pc[t \mapsto l'], \env, t, r, \sd, \true \rangle}
 \end{prooftree}
\]
}
\\[0.5cm]
\scalebox{0.9}{
\[
 \begin{prooftree}
  \Hypo{c=\enable \indent t=r \indent \pc(t)=l \indent (\forall ts \in \ready. \task{ts} \wedge \pmap(r) \geq \pmap(ts) \ \vee \ \sd = \true \ \vee \ \isr{r}) }
  
  \Infer1[\tiny ENINT-NS]{s \trel_{\instr} \langle \blocked, \suspended, \ready, \pmap, \acquired, \fmap, \pc[t \mapsto l'], \env, r, i, \sd, \false \rangle}
 \end{prooftree}
\]
}
\\[0.5cm]
\scalebox{0.9}{
\[
 \begin{prooftree}
  \Hypo{ { c=\enable \smspace t=r \smspace \task{t} \smspace \pc(t)=l \smspace \exists ts \in \ready. \task{ts} \wedge \pmap(ts) = max(\{\pmap(u) | u \in \ready \wedge \task{u} \}) \smspace \sd = \false }}
  
  \Infer1[\tiny ENINT-CS]{s \trel_{\instr} \langle \blocked, \suspended, \ready, \pmap, \acquired, \fmap, \pc[t \mapsto l'], \env, ts, i, \sd, \false \rangle}
 \end{prooftree} 
\]
}
\\[0.5cm]
\scalebox{0.9}{
\[
 \begin{prooftree}
  \Hypo{c=\enable \indent t \in \ready \indent \isr{t} \indent t \neq r \indent \pc(t)=l=\ent_{\fmap(t)} \indent \pmap(t) > \pmap(r) \indent \id = \false }
  
  \Infer1[\tiny ENINT-INT]{s \trel_{\instr} \langle \blocked, \suspended, \ready, \pmap, \acquired, \fmap, \pc[t \mapsto l'], \env, t, r, \sd, \false \rangle}
 \end{prooftree}
\]
}
\\[0.5cm]
\scalebox{0.9}{
\[
 \begin{prooftree}
  \Hypo{c=\lock(\var{m}) \indent t=r \indent \pc(t)=l \indent (\acquired(m)=\undef \ \vee \ \acquired(m) = r) }
  
  \Infer1[\tiny LOCK-AQ]{s \trel_{\instr} \langle \blocked, \suspended, \ready, \acquired[m \mapsto r], \fmap, \pc[t \mapsto l'], \env, r, i, \sd, \id \rangle}
 \end{prooftree}
\]
}
\\[0.5cm]
\scalebox{0.8}{
\[
\hspace*{-3cm}
 \begin{prooftree}
  \Hypo{ { c=\lock(\var{m}) \smspace t=r \smspace \task{t} \smspace \pc(t)=l \smspace \acquired(m) = ts \neq r \smspace (\sd \vee \id) = \false \smspace \exists ts' \in \ready. ts' \neq r \ \wedge \ \task{ts'} \ \wedge \ \pmap(ts') = max(\{ \pmap(u) | u \in \ready-\{r\} \wedge \task{u} \}) }}
  
  \Infer1[\tiny LOCK-CS]{s \trel_{\instr} \langle \blocked \union \{r\}, \suspended, \ready - \{r\}, \acquired, \fmap, \pc, \env, ts', i, \sd, \id \rangle}
 \end{prooftree}
\]
}
\\[0.5cm]
\scalebox{0.9}{
\[
 \begin{prooftree}
  \Hypo{c=\lock(\var{m}) \smspace t \in \ready \smspace \isr{t} \smspace t \neq r \smspace \pc(t)=\ent_{\fmap(t)} \smspace \pmap(t) > \pmap(r) \smspace \acquired(m) = \undef \smspace \id = \false}
  
  \Infer1[\tiny LOCK-AQ-INT]{s \trel_{\instr} \langle \blocked, \suspended, \ready, \pmap, \acquired[m \mapsto t], \fmap, \pc[t \mapsto l'], \env, t, r, \sd, \id \rangle }
 \end{prooftree}
\]
}
\\[0.5cm]
\scalebox{0.9}{
\[
 \begin{prooftree}
  \Hypo{c=\unlock(\var{m}) \indent t=r \indent \pc(t)=l \indent (\acquired(m) = r \vee \acquired(m) = \undef) }
  
  \Infer1[\tiny UNLOCK]{s \trel_{\instr} \langle \blocked, \suspended, \ready, \acquired[m \mapsto \undef], \fmap, \pc[t \mapsto l'], \env, r, i, \sd, \id \rangle}
 \end{prooftree}
\]
}
\\[0.5cm]
\scalebox{0.9}{
\[
 \begin{prooftree}
  \Hypo{c=\unlock(\var{m}) \indent t \in \ready \indent \isr{t} \indent t \neq r \indent \pc(t)=l=\ent_{\fmap(t)} \indent \pmap(t) > \pmap(r) \indent \acquired(m) \neq t \indent \id = \false }
  
  \Infer1[\tiny UNLOCK-INT]{s \trel_{\instr} \langle \blocked, \suspended, \ready, \acquired, \fmap, \pc[t \mapsto l'], \env, t, r, \sd, \id \rangle}
 \end{prooftree}
\]
}
\\[0.5cm]
\scalebox{0.9}{
\[
 \begin{prooftree}
  \Hypo{c=\block \indent t=r \indent \task{t} \indent \pc(t)=l \indent (\sd \vee \id) = \true }
  
  \Infer1[\tiny BLK-NS]{s \trel_{\instr} \langle \blocked, \suspended, \ready, \pmap, \acquired, \fmap, \pc[t \mapsto l'], \env, r, i, \sd, \id \rangle}
 \end{prooftree}
\]
}
\\[0.5cm]
\scalebox{0.9}{
\[
\hspace*{-2cm}
 \begin{prooftree}
  \Hypo{{ c=\block \smspace t=r \smspace \task{t} \smspace \pc(t)=l \smspace (\sd \vee \id) = \false \smspace \exists ts \in \ready. \task{ts}\ \wedge \ ts \neq r \ \wedge \ \pmap(ts) = max(\{ \pmap(u) | u \in \ready - \{r\} \ \wedge \ \task{u} \}) }}
  
  \Infer1[\tiny BLK-CS]{s \trel_{\instr} \langle \blocked \union \{r\}, \suspended, \ready - \{r\}, \pmap, \acquired, \fmap, \pc[t \mapsto l'], \env, ts, i, \sd, \id \rangle}
 \end{prooftree}
\]
}
\\[0.5cm]
\scalebox{0.9}{
\[
 \begin{prooftree}
  \Hypo{{ c=\start \smspace t=r=0 \smspace  (\sd \vee \id)=\false \smspace \exists ts \in (\suspended \union \ready). \task{ts} \ \wedge \ \pmap(ts) = max(\{\pmap(u) | u \in \suspended \union \ready \ \wedge \ \task{u} \}) }}
  
  \Infer1[\tiny START]{s \trel_{\instr} \langle \blocked, \emptyset, \suspended \union \ready, \pmap, \acquired, \fmap, \pc[t \mapsto l'], \env, ts, i, \false, \false \rangle }
 \end{prooftree}
\]
}
\\[0.5cm]
\scalebox{0.9}{
\[
 \begin{prooftree}
  \Hypo{t \in \blocked \indent \task{r} \indent ((\sd \vee \id)=\true \ \vee \ \pmap(t) \leq \pmap(r) )}
  
  \Infer1[\tiny UNBLK-NS]{s \trel_{*} \langle \blocked - \{t\}, \suspended, \ready \union \{t\}, \pmap, \acquired, \fmap, \pc, \env, r, i, \sd, \id \rangle}
 \end{prooftree}
\]
}
\\[0.5cm]
\scalebox{0.9}{
\[
 \begin{prooftree}
  \Hypo{t \in \blocked \indent \task{r} \indent (\sd \vee \id) = \false \indent \pmap(t) > \pmap(r) }
  
  \Infer1[\tiny UNBLK-CS]{s \trel_{*} \langle \blocked - \{t\}, \suspended, \ready \union \{t\}, \pmap, \acquired, \fmap, \pc, \env, t, i, \sd, \id \rangle}
 \end{prooftree}
\]
}
\\[0.5cm]
\scalebox{0.9}{
\[
 \begin{prooftree}
  \Hypo{t \in \ready \indent \task{t} \indent t \neq r \indent (\sd \vee \id) = \false \indent \pmap(t) = \pmap(r) }
  
  \Infer1[\tiny TSHARE]{s \trel_{*} \langle \blocked, \suspended, \ready, \pmap, \acquired, \fmap, \pc, \env, t, i, \sd, \id \rangle}
 \end{prooftree}
\]
}
\\[0.5cm]
For the commands \myskip, \cmd{x:=e}, \assume, \disable, \enable, \lock, and \unlock\ permitted in an ISR thread, the following constraints need to hold on $s'$. 
If the current ISR thread is executing the last statement then $r'$ is the highest priority ISR which was interrupted, if there exists one, and $i' = i$. If no ISRs were interrupted then $r' = i$, the interrupted task thread and $i' = \main$, a default value. Also, $\pc'(t)=\ent_{\fmap(t)}$.

\end{document}